\newcommand{\ket}[1]{|#1\rangle}
\newcommand{\bra}[1]{\langle#1|}
\theoremstyle{plain}
\newtheorem{thm}{Theorem}
\newtheorem{lem}{Lemma}
\newtheorem{con}{Conjecture}
\newtheorem{cor}{Corollary}
\begin{document}

\title{Fully non-positive-partial-transpose genuinely entangled subspaces}

\author{Owidiusz Makuta}
\affiliation{Center for Theoretical Physics, Polish Academy of Sciences, Aleja Lotnik\'{o}w 32/46, 02-668 Warsaw, Poland}

\author{Błażej Kuzaka}
\affiliation{Center for Theoretical Physics, Polish Academy of Sciences, Aleja Lotnik\'{o}w 32/46, 02-668 Warsaw, Poland}

\author{Remigiusz Augusiak}
\affiliation{Center for Theoretical Physics, Polish Academy of Sciences, Aleja Lotnik\'{o}w 32/46, 02-668 Warsaw, Poland}

\begin{abstract}

Genuinely entangled subspaces are a class of subspaces in the multipartite Hilbert spaces that are composed of only genuinely entangled states. They are thus an interesting object of study in the context of multipartite
entanglement. Here we provide a construction of multipartite subspaces that are not only genuinely entangled but also fully non-positive-partial-transpose (NPT) in the sense that any mixed state supported on them has non-positive partial transpose across any bipartition. Our construction originates from the stabilizer formalism known for its use in quantum error correction. To this end, we first introduce a couple of criteria allowing to assess whether any state from a given non-trivial stabilizer subspace is genuinely multipartite entangled. We then use these
criteria to construct genuinely entangled stabilizer subspaces for any number of parties and prime local dimension and conjecture them to be of maximal dimension achievable within the stabilizer formalism. At the same time, we prove that every genuinely entangled subspace is fully NPT in the above sense, which implies
a quite surprising fact that no genuinely entangled stabilizer subspace can support PPT entangled states. 
\end{abstract}

\maketitle

\section{Introduction}

Entanglement is one of the most exhilarating features of quantum systems \cite{RevModPhys.81.865}. Its almost magic-like properties defy our real-life intuitions and show that the laws of the physical at a small scale cannot be reliably modelled with methods of classical physics. Throughout the years, many have tried, and succeeded, to use entanglement as a resource for seemingly infinite number of interesting applications that often have no classical analogue, such as the well-known quantum teleportation \cite{PhysRevLett.70.1895} or quantum cryptography \cite{Ekert91}.
Entanglement is also vital for quantum steering \cite{PhysRevLett.98.140402} or Bell nonlocality \cite{Bell} which represent other forms of quantum correlations that have been turned into independent resources themselves (see, e.g., \cite{Steeringreview,Bellreview14}). In particular, Bell nonlocality lies at
the heart of quantum information in the device-independent version \cite{PhysRevLett.98.230501}. It is thus not surprising that until now entanglement theory has been a very lively field of research.

Obviously, in order to study entanglement, one has to identify which states are entangled in the first place. The separability problem, that is, the problem of deciding whether a given, in general mixed, quantum state is entangled, has been thus intensively studied over years \cite{GUHNE20091}. At first, this problem was considered mostly for bipartite quantum states \cite{GURVITS2004448}, but it quickly gained on importance in the many-body regime. This was driven by the development of experimental techniques that allow one to prepare and control interesting many-body states (see, e.g., Refs. \cite{JDScience,Kasevich,Experimental}) but also by the identification of certain applications of multipartite entanglement such as quantum metrology \cite{QuantumMetro11} or quantum computing \cite{QuantumComp01}. Entanglement turned also useful in the study of many-body phenomena \cite{RevModPhys.80.517}.

Among many forms of entanglement featured by the multipartite scenario, the strongest and at the same time most valuable from the application point of view, is arguably the genuine multipartite entanglement \cite{G_hne_2010,PhysRevLett.106.190502,PhysRevA.83.040301}. Roughly speaking, a genuinely entangled multipartite quantum state 
is one that cannot be represented as a convex mixture of other states that
are separable with respect to some bipartitions. A significant amount of attention has been devoted in the literature to characterise the properties of genuine entanglement and to provide methods of its detection \cite{PhysRevLett.106.190502,PhysRevA.98.062102,Micuda:19,Eltschka2020maximumnbody,PhysRevLett.128.080507}. Yet, the problem is tremendously difficult, certainly more difficult than in the bipartite case, and our understanding of genuine entanglement still remains incomplete.

Our aim in this work is to join the effort of characterisation of genuine entanglement in multipartite systems, taking however, a slightly different perspective. Instead of focusing on particular multipartite quantum states (in general mixed) we rather consider entangled subspaces of multipartite Hilbert spaces. In this way we gain generality: a statement made for a subspace applies to any mixed state defined on it. While genuinely entangled subspaces have recently been an object of intensive exploration \cite{PhysRevA.98.012313,2020,PhysRevA.99.032335,Demianowicz_2021,demianowicz2021universal,Antipin_2021}, there are actually not many schemes allowing to judge whether a given subspace (and thus all states acting on it) is genuinely entangled. Here we will address this problem, concentrating on a particular class of multipartite subspaces that originate from the multiqudit stabilizer formalism \cite{GottesmanThesis}. The latter, being known for its use in quantum error correction \cite{PhysRevLett.77.793,KITAEV20032,PhysRevA.103.042420,Huber2020quantumcodesof}, provides an easy-to-handle and convenient description of a certain class of quantum states.  

We first provide a simple necessary and sufficient criterion allowing to
decide whether a given stabilizer subspace in an $N$-qudit
Hilbert space is genuinely entangled. Importantly, our criterion
reduces to checking certain commutation relations for operators
generating a given stabilizer subspace which can be done in a finite number of steps,
thus allowing us to avoid the tedious task of assessing whether every state belonging to that subspace is genuinely entangled. We then generalize the vector formalism introduced for the multiqubit case in Ref. \cite{Makuta_2021} that provides an efficient description of entanglement properties of stabilizer subspaces. This formalism is also particularly useful in constructing stabilizer subspaces that are genuinely entangled, and thus we employ it to provide a family of such subspaces which we believe to have the maximal dimension achievable within the stabilizer formalism. At the same time, we explore the use of partial transposition in deciding whether stabilizer subspaces are genuinely entangled and show that any mixed state supported on such a subspace must be NPT with respect to any bipartition. We thus provide a family of (possibly) maximally-dimensional genuinely entangled subspaces in $N$-qudit Hilbert spaces that are fully NPT, thus extending the results of Ref. \cite{PhysRevA.87.064302} to the multipartite regime.

\section{Preliminaries}

This section serves as an introduction of commonly used terms within the field of quantum information. Readers already familiar with concepts of genuine entanglement, partial transpose and stabilizer formalism are advised to skip ahead to Section \ref{sec_ge_stab}.\\

\textit{(1) Genuine entanglement.} 
Let us consider a scenario where $N$ parties share a pure state $\ket{\psi}\in \mathcal{H}$. We can decompose the Hilbert space $\mathcal{H}$ in the following manner
\begin{equation}\label{eq_hilbert_tensor}
\mathcal{H}=\bigotimes_{i=1}^{N} \mathcal{H}_{i},
\end{equation}
where $\mathcal{H}_{i}$ is a Hilbert subspace associated with $i$'th party. 
Let us then consider a partition of the set of parties $I_N:=\{1,\ldots,N\}$
into two disjoint subsets $Q$ and $\overline{Q}$ and call it a bipartition of $I_N$. We say that $\ket{\psi}$ is separable with respect the bipartition $Q|\overline{Q}$ if
\begin{equation}
\ket{\psi}= \ket{\psi}_{Q}\otimes \ket{\psi}_{\overline{Q}},
\end{equation}
for some $\ket{\psi}_{Q}\in \mathcal{H}'=\bigotimes_{i\in Q}\mathcal{H}_{i}$. 
Otherwise, we call it entangled with respect to the bipartition 
$Q|\overline{Q}$. Then, we call $\ket{\psi}$ genuinely multipartite entangled
iff it is entangled with respect to any nontrivial bipartition of 
the set $I_N$ \cite{GUHNE20091}.
Of course, no state could genuinely entangled if we were to consider trivial bipartitions, i.e., ones for which $Q= \emptyset$ or $Q=\{1,\dots,N\}$. 
Thus, whenever we say "all bipartitions" we only mean the nontrivial ones.

One can also introduce the above notions in the mixed-state case. Precisely, we say that a density matrix $\rho$ is entangled with respect to a given bipartition $Q|\overline{Q}$ if it cannot be represented as a convex combination of pure states that are separable across this bipartition. Then, $\rho$ is called genuinely multipartite entangled if it cannot be represented as a convex combination of pure state in which each pure state is separable across 
possibly different bipartitions.

In an analogous manner we can also extend these definitions to entire subspaces of the joint Hilbert space $\mathcal{H}$: a subspace is entangled with respect to some bipartition or it is genuinely entangled if every state from that subspace is of the corresponding type \cite{PhysRevA.98.012313}. In particular, a genuinely entangled subspace of $\mathcal{H}$ is one that consists of only genuinely entangled pure states.

\textit{(2) Positive partial transpose states.}
Consider a density matrix $\rho$ acting on $\mathcal{H}$. We call it a  non-positive-transpose (NPT) state with respect to a bipartition $Q|\overline{Q}$ if it its partial transposition with respect to $Q|\overline{Q}$ is not a positive semi-definite matrix,
\begin{equation}\label{eq_npt}
\rho^{T_{Q}}\ngeqslant 0,
\end{equation}
where $T_{Q}$ denotes a transposition over subsystems from $Q$ defined as
\begin{equation}
\rho^{T_{Q}}=\left(\bigotimes_{i\in Q} T^{(i)}\otimes \bigotimes_{j\in\overline{Q}}I^{(j)}\right)[\rho],
\end{equation}
where $T^{(i)}$ is the transposition in the standard basis performed on the $i$th subsystem, whereas $I^{(j)}$ is the identity on the subsystem $j$.
In other words, partial transpose is a transpose 
operation carried out over only some parties. As an example, let us consider a matrix 
\begin{equation}
C=\sum_{i,j,k,l}\gamma_{i,j,k,l}\ket{i}_{1}\bra{j}\otimes \ket{k}_{2}\bra{l},
\end{equation}
where $1,2$ indice indicate the party. Then the 
partial transpose over the first party of this 
matrix is defined as follows
\begin{equation}
C^{T_{1}}=\sum_{i,j,k,l}\gamma_{i,j,k,l}\ket{j}_{1}\bra{i}\otimes \ket{k}_{2}\bra{l}.
\end{equation}

Let us finally define fully NPT states to be those for which 
\eqref{eq_npt} holds true for any bipartition $Q|\overline{Q}$
We can also extend the above definitions of to subspaces of $\mathcal{H}$.
Precisely, a subspace $V\subset \mathcal{H}$ is termed NPT 
with respect to a bipartition $Q|\overline{Q}$ if every 
mixed state supported on it is NPT with respect to $Q|\overline{Q}$.
Analogously, $V$ is called fully NPT if any mixed state supported on it is fully NPT.

\textit{(3) Qudit stabilizer formalism.} 
In this work we concentrate on a class of quantum states (pure or mixed) that originate from the multiqudit stabilizer formalism. Let us assume that $N$ parties share a pure state $\ket{\psi}\in \mathcal{H}$. The decomposition of Hilbert space (\ref{eq_hilbert_tensor}) also holds true in this scenario, however in this case we additionally assume that for every $i\in\{1,\dots,N\}$ we have $\mathcal{H}_{i}=\mathbb{C}^{d}$, where $d$ is called a dimension of a local Hilbert space, or in other words each party holds one qudit. 

The most important objects in the qudit stabilizer formalism are 
the generalised Pauli operators defined as
\begin{equation}\label{eq_xz_def}
X=\sum_{i=0}^{d-1}\ket{i+1}\!\bra{i},\qquad Z=\sum_{i=0}^{d-1}\omega^{i}\ket{i}\!\bra{i},
\end{equation}
where $\omega=\operatorname{exp}(2\pi \mathbb{i}/d)$ is a primitive $d$th root of unity, $\mathbb{i}$ denotes the imaginary unit, and the addition is modulo $d$, meaning that $\ket{d}\equiv\ket{0}$. With a bit of calculation one can verify that these matrices enjoy the following properties:
\begin{equation}
X^{\dagger}X=\mathbb{1},\qquad Z^{\dagger}Z=\mathbb{1},
\end{equation}
\begin{equation}\label{eq_xz^d}
X^{d}=\mathbb{1},\qquad Z^{d}=\mathbb{1},
\end{equation}
meaning that they are unitary and their spectra are 
$\{1,\omega,\ldots,\omega^{d-1}\}$. Moreover, they obey 
the following commutation relations
\begin{equation}\label{eq_xz_zx}
X^{i}Z^{j}=\omega^{-ij}Z^{j}X^{i}
\end{equation}
for any $i,j=0,\ldots,d-1$.

Let us now introduce the generalised Pauli group $\mathbbm{P}_{N,d}$ as a set of all $N$-fold tensor products of matrices $\omega^{r} X^{n}Z^{m}$, where 
and $r,n,m\in \{0,\dots,d-1\}$, equipped with a matrix multiplication as the group operation. 

Consider then a subgroup $\mathbb{S}\subset\mathbbm{P}_{N,d}$. 
It is called a stabilizer if \cite{PhysRevA.71.042315}:
\begin{enumerate}
    \item all elements of $\mathbb{S}$ mutually commute,
    \item given $a\in\mathbb{C}$, we have $a\mathbb{1}\in\mathbb{S}$ if and only if $a=1$.
\end{enumerate} 
We say that a state $\ket{\psi}$ is stabilized by $\mathbb{S}$ if
\begin{equation}\label{eq_stab}
    G\ket{\psi}=\ket{\psi}
\end{equation}
holds true for any $G\in\mathbb{S}$. Then, a stabilizer subspace $V$ is a maximal subspace stabilized by a given stabilizer, i.e., a state $\ket{\psi}$ is stabilized by $\mathbb{S}$ if and only if $\ket{\psi}\in V$.

We can also extend the definition of stabilisation to the mixed states: we say that a state described by a density matrix $\rho$ is stabilized by $\mathbb{S}$, if for every operator $G\in \mathbb{S}$ we have 
\begin{equation}
G\rho=\rho G=\rho.
\end{equation}
For a more in depth discussion of a stabilizer formalism see \cite{GHEORGHIU2014505, nielsen00}.

Given a stabilizer $\mathbb{S}$, it is often convenient to represent it in terms of the smallest set of independent elements of $\mathbb{S}$, called generators, using which one can represent any other element of $\mathbb{S}$, where 'independent' means that one generator cannot be represented as a product of the other generators.
\begin{equation}
G_{i}\neq G_{1}^{\alpha_{1}}\dots G_{i-1}^{\alpha_{i-1}}G_{i+1}^{\alpha_{i+1}}\dots G_{k}^{\alpha_{k}}
\end{equation}
for $\alpha_{i}\in\{0,\dots,d-1\}$. Of course from the definition of a stabilizer it follows that:
\begin{equation}
    [G_i,G_j]=0\qquad (i\neq j).
\end{equation}
For prime $d$, this representation of a stabilizer allows us to easily determine the dimension of the corresponding stabilizer subspace $V$, namely $\dim V=d^{N-k}$; for non-prime $d$ the situation is slightly more complicated (see Ref. \cite{GHEORGHIU2014505}).

\section{Genuinely entangled subspaces in a qudit stabilizer formalism}
\label{sec_ge_stab}

In this section we aim to characterise genuine entanglement in a qudit stabilizer formalism and find a genuinely entangled subspace that is as large as possible. To this end, we generalise the results from \cite{Makuta_2021} that concern genuine entanglement in a stabilizer formalism for a local Hilbert space of dimension $d=2$.

Let us consider a nontrivial bipartition $Q|\overline{Q}$ of $\{1,\dots,N\}$ and a stabilizer $\mathbb{S}=\langle G_{1},\dots,G_{k}\rangle$ with a corresponding stabilizer subspace $V\subset \mathcal{H}=\mathbb{C}_{d}^{\otimes N}$. Each generator $G_{i}$ can be decomposed with respect to the above bipartition in the following way
\begin{equation}\label{eq_g_bip}
G_{i}=G_{i}^{(Q)}\otimes G_{i}^{(\overline{Q})},
\end{equation}
where $G_{i}^{(Q)}$, $G_{i}^{(\overline{Q})}$ act on $\bigotimes_{i\in Q}\mathcal{H}_{i}$, $\bigotimes_{i\in \overline{Q}}\mathcal{H}_{i}$, respectively. With this we can formulate the following theorem.
\begin{thm}\label{thm_sep}
Given a stabilizer $\mathbb{S}=\langle G_{1},\dots, G_{k}\rangle$ of an arbitrary local dimension, the corresponding stabilizer subspace $V\neq\{0\}$ is entangled with respect to a bipartition $Q|\overline{Q}$ iff there exists a pair $i,j\in \{1,\dots,k\}$ for which the following holds true:
\begin{equation}\label{eq_gigj_com}
\left[G_{i}^{(Q)},G_{j}^{(Q)}\right]\neq 0.
\end{equation}
\end{thm}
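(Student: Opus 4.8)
```latex
The plan is to prove the contrapositive for the ``if'' direction and a direct argument for the ``only if'' direction, organizing everything around the observation that the local operators $G_i^{(Q)}$ are themselves (up to phases) generalized Pauli operators on the $Q$-subsystem, so their commutation relations are governed by the symplectic-type formula \eqref{eq_xz_zx}. The key structural fact I would use is that a stabilizer subspace $V$ is separable across $Q|\overline{Q}$ precisely when $V$ admits a product structure $V_Q\otimes V_{\overline{Q}}$ compatible with the stabilizer, which in turn should correspond to the restrictions $\{G_i^{(Q)}\}$ generating a legitimate stabilizer on the $Q$-subsystem (and likewise on $\overline{Q}$).

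\textbf{The ``only if'' direction (all $G_i^{(Q)}$ commute $\Rightarrow$ $V$ separable).} Suppose $[G_i^{(Q)},G_j^{(Q)}]=0$ for every pair $i,j$. First I would argue that since the full generators commute, $[G_i,G_j]=0$, and since $G_i=G_i^{(Q)}\otimes G_i^{(\overline{Q})}$, the commutation of the $Q$-parts forces the commutation of the $\overline{Q}$-parts as well; more precisely, using \eqref{eq_xz_zx} the overall commutator phase splits as a product of a $Q$-phase and a $\overline{Q}$-phase, so vanishing of the $Q$-phases forces vanishing of the $\overline{Q}$-phases. Hence $\{G_i^{(Q)}\}$ and $\{G_i^{(\overline{Q})}\}$ each generate commuting sets of Pauli operators. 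After rescaling each $G_i^{(Q)}$ and $G_i^{(\overline{Q})}$ by suitable phases so that they satisfy condition (2) of the stabilizer definition, these generate genuine stabilizers $\mathbb{S}_Q$ and $\mathbb{S}_{\overline{Q}}$ with nonempty stabilizer subspaces $V_Q$ and $V_{\overline{Q}}$. I would then show that any product $\ket{\phi}_Q\otimes\ket{\chi}_{\overline{Q}}$ with $\ket{\phi}_Q\in V_Q$ and $\ket{\chi}_{\overline{Q}}\in V_{\overline{Q}}$ (with phases chosen consistently across the two halves) is stabilized by every $G_i$, hence lies in $V$; this exhibits an explicit product state in $V$, so $V$ is not entangled across $Q|\overline{Q}$.

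\textbf{The ``if'' direction (some $[G_i^{(Q)},G_j^{(Q)}]\neq 0$ $\Rightarrow$ $V$ entangled).} Here I would argue by contradiction: assume $V$ contains a product state $\ket{\psi}=\ket{\phi}_Q\otimes\ket{\chi}_{\overline{Q}}$. For each generator, $G_i\ket{\psi}=\ket{\psi}$ becomes $(G_i^{(Q)}\ket{\phi}_Q)\otimes(G_i^{(\overline{Q})}\ket{\chi}_{\overline{Q}})=\ket{\phi}_Q\otimes\ket{\chi}_{\overline{Q}}$, which forces $G_i^{(Q)}\ket{\phi}_Q=\lambda_i\ket{\phi}_Q$ and $G_i^{(\overline{Q})}\ket{\chi}_{\overline{Q}}=\lambda_i^{-1}\ket{\chi}_{\overline{Q}}$ for some eigenvalue $\lambda_i$. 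Thus $\ket{\phi}_Q$ is a simultaneous eigenvector of all the $G_i^{(Q)}$. But if $[G_i^{(Q)},G_j^{(Q)}]\neq 0$, then by \eqref{eq_xz_zx} these two operators satisfy $G_i^{(Q)}G_j^{(Q)}=\omega^{c}G_j^{(Q)}G_i^{(Q)}$ with $\omega^c\neq 1$, and I would show a pair of unitaries with a nontrivial commutation phase of this form cannot share a common eigenvector (applying both orderings to $\ket{\phi}_Q$ yields $\lambda_i\lambda_j\ket{\phi}_Q=\omega^c\lambda_j\lambda_i\ket{\phi}_Q$, forcing $\omega^c=1$, a contradiction). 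Hence no product state can lie in $V$, so $V$ is entangled across $Q|\overline{Q}$.

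\textbf{Main obstacle.} The delicate point is the rescaling and phase-matching step in the ``only if'' direction: the restrictions $G_i^{(Q)}$ and $G_i^{(\overline{Q})}$ are generalized Pauli operators possibly carrying phases $\omega^{r}$ that prevent them from directly forming a stabilizer (they might fail condition (2), i.e. contain $\omega^r\mathbb{1}$ for $r\neq 0$, or have a nontrivial eigenvalue on the stabilized vector). I would need to verify that one can always absorb compatible phases into $G_i^{(Q)}$ and $G_i^{(\overline{Q})}$ so that the resulting operators stabilize genuine nonzero subspaces $V_Q,V_{\overline{Q}}$, and that the product of these subspaces truly lands inside $V$ rather than merely a subspace stabilized by the rescaled generators. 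Establishing existence and nontriviality of these local stabilizer subspaces, and ensuring the eigenvalue bookkeeping $\lambda_i\cdot\lambda_i^{-1}=1$ is globally consistent over all generators simultaneously, is where the real care is required; in the prime-dimension case the dimension formula $\dim V=d^{N-k}$ can be invoked to guarantee $V_Q,V_{\overline{Q}}\neq\{0\}$.
```
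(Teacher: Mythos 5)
Your ``if'' direction is correct and is essentially the paper's argument: a product state $\ket{\phi}_Q\otimes\ket{\chi}_{\overline{Q}}$ in $V$ forces $\ket{\phi}_Q$ to be a common eigenvector of all $G_i^{(Q)}$, which is impossible when $G_i^{(Q)}G_j^{(Q)}=\omega^{c}G_j^{(Q)}G_i^{(Q)}$ with $\omega^{c}\neq 1$ (the paper runs the same computation through an anticommutator, but the content is identical). The genuine gap is in the ``only if'' direction, precisely at the point you flag as the main obstacle: you never close it, and the tools you propose cannot close it. First, invoking $\dim V=d^{N-k}$ restricts you to prime $d$, whereas the theorem is stated for arbitrary local dimension; the paper explicitly notes this formula fails for non-prime $d$. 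Second, and more fundamentally, knowing that $V_Q\neq\{0\}$ and $V_{\overline{Q}}\neq\{0\}$ for \emph{some} phase rescalings chosen on each half separately does not produce a product state in $V$: you need a single simultaneous assignment, for all $i$ at once, of eigenvalues $\lambda_i$ attained by the $G_i^{(Q)}$ on one common eigenvector of the $Q$-half, such that the inverses $\lambda_i^{-1}$ are attained, again simultaneously on one common eigenvector, by the $G_i^{(\overline{Q})}$. Nothing in your argument matches the tuple realized on $Q$ with its inverse tuple on $\overline{Q}$; for instance, for $d=2$ a restriction like $XZ$ has eigenvalues $\pm\mathbb{i}$, so only the paired choices $(\mathbb{i},-\mathbb{i})$ or $(-\mathbb{i},\mathbb{i}$) across the cut can work, and a mismatched choice yields nothing. (This example also shows a structural snag: $(XZ)^2=-\mathbb{1}$, so the rescaling needed to give the local operators order $d$ uses phases $\pm\mathbb{i}$ outside the paper's Pauli group, and the restrictions $G_i^{(Q)}$ need not be independent, so ``the restrictions generate a legitimate stabilizer'' is not available as stated.)

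The missing idea, which is how the paper proceeds, is simultaneous diagonalization rather than local stabilizers. Since all $G_i^{(Q)}$ commute (and hence, as you correctly note, all $G_i^{(\overline{Q})}$ commute as well), pick a common eigenbasis $\{\ket{\phi_a}\}$ of all the $G_i^{(Q)}$ and a common eigenbasis $\{\ket{\theta_b}\}$ of all the $G_i^{(\overline{Q})}$. Every $G_i=G_i^{(Q)}\otimes G_i^{(\overline{Q})}$ is then diagonal in the product basis $\{\ket{\phi_a}\otimes\ket{\theta_b}\}$, and the joint eigenvalue-$1$ eigenspace of a family of operators that are all diagonal in one common basis is spanned by exactly those basis vectors on which every eigenvalue equals $1$. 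Hence $V$ is spanned by product vectors, and $V\neq\{0\}$ immediately yields a product state in $V$. The eigenvalue bookkeeping you were worried about comes for free from this spanning argument, with no rescaling, no independence assumptions, and no restriction on $d$.
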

\begin{proof}
We will begin by proving the "$\Rightarrow$" implication by contradiction. Let us assume that $V$ is entangled with respect to a bipartition $Q|\overline{Q}$ and that for all $i,j\in\{1,\dots,N\}$ we have
\begin{equation}
\left[G_{i}^{(Q)},G_{j}^{(Q)}\right]=0.
\end{equation}
Since all generators commute, this implies that
\begin{equation}
\left[G_{i}^{(\overline{Q})},G_{j}^{(\overline{Q})}\right]=0
\end{equation}
for all  $i,j\in\{1,\dots,N\}$.

Consequently, there exists a common eigenbasis for all $G_{i}^{(Q)}$ and another common eigenbasis for all $G_{i}^{(\overline{Q})}$:
\begin{equation}\label{eq_g_bip_basis}
G_{i}^{(Q)}\ket{\phi_{j}}=\lambda_{j}^{(i)}\ket{\phi_{j}}\qquad \textrm{and} \qquad G_{i}^{(\overline{Q})}\ket{\theta_{j}}=\overline{\lambda}_{j}^{(i)}\ket{\theta_{j}}
\end{equation}
for $\lambda_{j}^{(i)}, \overline{\lambda}_{j}^{(i)}\in \mathbb{C}$. Using this fact, we can construct a basis of the joint Hilbert space $\mathcal{H}$ by taking the tensor product of elements of both bases, that is,
\begin{equation}
\ket{\varphi_{ij}}=\ket{\phi_{i}}\otimes\ket{\theta_{j}}.
\end{equation}
Importantly, it follows from Eqs. (\ref{eq_g_bip}) and (\ref{eq_g_bip_basis})  that this is a common eigenbasis of generators $G_{i}$, hence some $\ket{\varphi_{ij}}$ correspond to the eigenvalue one of all $G_{i}$, or $V$ is empty. The later contradicts the assumption that $V$ is non-empty, while the former implies that $\ket{\varphi_{ij}}$ belongs to $V$. Since $\ket{\varphi_{ij}}$ is clearly separable with respect to the bipartition $Q|\overline{Q}$, this contradicts the assumption that $V$ is entangled across this bipartition.

From the definition of Pauli group $\mathbb{P}_{N,d}$, $G_{i}$ is a tensor product of matrices $\omega^{r}X^{n}Z^{m}$ for $r,n,m\in\{0,\dots,d-1\}$.

Let us now move on to the "$\Leftarrow$" implication. Let us assume that $V$ is separable with respect to a bipartition $Q|\overline{Q}$ but there exists a pair of generators $G_{i},G_{j}$ for which \eqref{eq_gigj_com} holds true. This means that there exists $\ket{\psi}\in V$ such that for the bipartition $Q|\overline{Q}$ we have
\begin{equation}
\ket{\psi}=\ket{\psi}_{Q}\otimes \ket{\psi}_{\overline{Q}}.
\end{equation}
This together with (\ref{eq_g_bip}) implies that the following holds true
\begin{equation}
G_{i}^{(Q)}\ket{\psi}_{Q}=e^{\mathbb{i} \phi_{i}}\ket{\psi}_{Q}
\end{equation}
for all $i$, where $\phi_{i}\in\mathbb{R}$. This formula allows us to write
\begin{equation}\label{eq_g_anticom_1}
\left\{G_{i}^{(Q)},G_{j}^{(Q)}\right\}\ket{\psi}_{Q}=2e^{\mathbb{i}(\phi_{i}+\phi_{j})}\ket{\psi}_{Q},
\end{equation}
where $\{\cdot,\cdot\}$ is an anticommutator. On the other hand, from the definition of Pauli group $\mathbb{P}_{N,d}$, $G_{i}$ is a tensor product of matrices $\omega^{r}X^{n}Z^{m}$ for $r,n,m\in\{0,\dots,d-1\}$, which implies 
\begin{eqnarray}\label{eq_g_anticom_2}
\left\{G_{i}^{(Q)},G_{j}^{(Q)}\right\}\ket{\psi}_{Q}&=&\left(1+\omega^{\tau_{i,j}^{(Q)}}\right)G_{j}^{(Q)}G_{i}^{(Q)}\ket{\psi}_{Q}\nonumber\\&=&\left(1+\omega^{\tau_{i,j}^{(Q)}}\right)\mathrm{e}^{\mathbb{i}(\phi_i+\phi_j)}\ket{\psi}_{Q},\nonumber\\
\end{eqnarray}
where the value of $\tau_{i,j}^{(Q)}$ depends on the exact form of $G_{i}^{(Q)}$ and $G_{j}^{(Q)}$. After comparing Eqs. (\ref{eq_g_anticom_1}) and (\ref{eq_g_anticom_2}), one arrives at 
\begin{equation}
1+\omega^{\tau_{i,j}^{(Q)}}=2,
\end{equation}
which is fulfilled only if $\tau_{i,j}^{(Q)}=0$, which is to say that for the bipartition $Q|\overline{Q}$, for every $i,j\in\{1,\dots,k\}$ we have
\begin{equation}
\left[G_{i}^{(Q)},G_{j}^{(Q)}\right]=0,
\end{equation}
which contradicts the assumption.
\end{proof}

This theorem provides an alternative technique of determining whether a given subspace originating from the qudit stabilizer formalism is entangled. This  requires far less effort than checking the same property on the explicit form of the stabilizer subspace, in particular for every pure state belonging to it. Moreover, by applying Theorem \ref{thm_sep} to all nontrivial bipartitions we arrive at the following corollary.
\begin{cor}\label{cor_ge}
Given a stabilizer $\mathbb{S}=\langle G_{1},\dots, G_{k}\rangle$ of an arbitrary local dimension, the corresponding stabilizer subspace $V\neq\{0\}$ is genuinely entangled iff for each bipartition $Q|\overline{Q}$ there exists a pair $i,j\in \{1,\dots,k\}$ for which the following holds true:
\begin{equation}
\left[G_{i}^{(Q)},G_{j}^{(Q)}\right]\neq 0.
\end{equation}
\end{cor}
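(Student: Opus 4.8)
The plan is to obtain the corollary directly from Theorem \ref{thm_sep} by unwinding the definition of a genuinely entangled subspace into a family of statements about individual bipartitions. Recall from the preliminaries that a pure state $\ket{\psi}$ is genuinely multipartite entangled precisely when it is entangled with respect to \emph{every} nontrivial bipartition $Q|\overline{Q}$ of $I_N$, and that a subspace is declared entangled across a given bipartition, or genuinely entangled, when every state it contains is of the corresponding type. The first step is therefore to rephrase ``$V$ is genuinely entangled'' as ``every $\ket{\psi}\in V$ is entangled across every nontrivial bipartition $Q|\overline{Q}$''.

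The one point requiring care is that the two universal quantifiers---over states in $V$ and over bipartitions---may be freely interchanged. Indeed, ``for every $\ket{\psi}\in V$ and every $Q|\overline{Q}$, $\ket{\psi}$ is entangled across $Q|\overline{Q}$'' is logically identical to ``for every $Q|\overline{Q}$ and every $\ket{\psi}\in V$, $\ket{\psi}$ is entangled across $Q|\overline{Q}$''. By the subspace-level definition of bipartite entanglement, the inner clause of the second reading states exactly that $V$ is entangled with respect to the fixed bipartition $Q|\overline{Q}$. Thus I would establish the equivalence ``$V$ is genuinely entangled'' $\iff$ ``$V$ is entangled across every nontrivial bipartition $Q|\overline{Q}$''.

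With this reduction in hand, the remaining step is to invoke Theorem \ref{thm_sep} once for each bipartition. For every fixed $Q|\overline{Q}$ the theorem equates entanglement of $V$ across that bipartition with the existence of a pair $i,j\in\{1,\dots,k\}$ satisfying $[G_i^{(Q)},G_j^{(Q)}]\neq 0$. Substituting this equivalence bipartition by bipartition into the quantified condition from the previous paragraph yields precisely the claimed criterion, and the proof is complete. Since the statement is essentially a logical repackaging of the theorem applied to all bipartitions, there is no genuine computational obstacle; the only thing to get right is the faithful use of the subspace-level definitions of bipartite and genuine entanglement so that the quantifier interchange is justified.
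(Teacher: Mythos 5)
Your proposal is correct and matches the paper's own argument: the paper derives Corollary \ref{cor_ge} precisely by applying Theorem \ref{thm_sep} to every nontrivial bipartition, with the (elementary) quantifier interchange you spell out left implicit. Your version is just a more explicit write-up of the same reduction, correctly restricting throughout to nontrivial bipartitions as the paper's conventions require.
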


Let us notice that the above fact is a direct generalisation of Theorem 1 from Ref. \cite{Makuta_2021} and it allows us to check whether a stabilizer subspace is genuinely entangled just by examining generators of the corresponding stabilizer.

Having the above description of genuine entanglement within the qudit stabilizer formalism at hand, we now aim to find a genuinely entangled stabilizer subspace of the largest possible dimension. Luckily, we can once again refer to \cite{Makuta_2021} in which the maximal dimension of a genuinely entangled, stabilizer subspace for a case of $d=2$ was found. It has been achieved via a certain novel vector formalism, and so our first step in finding the aforementioned largest subspace is to generalise this formalism to arbitrary $d$. 

In the remainder of this section we assume that $d$ is a prime number. Let us consider the set $\mathbb{Z}_{d}=\{0,\dots,d-1\}$ equipped with addition and multiplication modulo $d$. We denote by $F_{N,d}=\mathbb{Z}_{d}^{N}$ a vector space over $\mathbb{Z}_{d}$, i.e., every $f\in F_{N,d}$ is a $N$-dimensional vector whose entries are from $\{0,\dots,d-1\}$. 

We can take advantage of this vector space to find a representation of a stabilizer $\mathbb{S}=\langle G_{1},\dots,G_{k}\rangle$ that gives us a clearer view on the entanglement of the subspace $V$. Let us consider a vector
\begin{equation}\label{eq_vij_def}
v_{i,j}=\sum_{n=1}^{N}\tau_{i,j}^{(n)}e_{n},
\end{equation}
where $e_{n}$ is a unit vector that has an entry $1$ on the $n$'th side and $0$ elsewhere, and $\tau_{i,j}^{(n)}\in\{0,\dots,d-1\}$ is defined as follows
\begin{equation}\label{eq_g_com}
G_{i}^{(n)}G_{j}^{(n)}=\omega^{\tau_{i,j}^{(n)}}G_{j}^{(n)}G_{i}^{(n)}.
\end{equation}
For the definition of a stabilizer, all generators $G_{1},\dots, G_{k}$ have to mutually commute, which implies that for all $i$, $j$ we have
\begin{equation}\label{eq_vij_com}
\sum_{n=1}^{N}v_{i,j}^{(n)}=0,
\end{equation}
where $v_{i,j}^{(n)}$ is an $n$'th entry of $v_{i,j}$. Let us denote by $K(\mathbb{S})$ a subspace of $F_{N,d}$ that is spanned by all the vectors $v_{i,j}$ (\ref{eq_vij_def}) corresponding to the stabilizer $\mathbb{S}$. It is not difficult to see that for any stabilizer $\mathbb{S}$, $K(\mathbb{S})$ is a proper subspace of $F_{N,d}$: there exist vectors in $F_{N,d}$ that do not satisfy the condition (\ref{eq_vij_com}), and, at the same time, any linear combination of vectors that fulfil (\ref{eq_vij_com}) fulfils (\ref{eq_vij_com}) too. In fact, 
the condition (\ref{eq_vij_com}) implies that the maximal dimension of $K(\mathbb{S})$ is $N-1$.

To make full use of the above formalism we need to reformulate Corollary \ref{cor_ge} in terms of vectors $v_{i,j}$. To this end, we need to define a representation of a bipartition of the set $\{1,\dots,N\}$, and so let us consider a vector $\phi\in F_{N,2}$ (notice that this is $F_{N,2}$ and not $F_{N,d}$). We call $\phi$ a representation of a bipartition $Q|\overline{Q}$ if the following holds true 
\begin{equation}
\phi^{(n)}=1 \Longleftrightarrow n\in Q.
\end{equation}
Notably, this is not an isomorphism, since every $Q|\overline{Q}$ has two equivalent representation: one where $\phi^{(n)}=1$ for $n\in Q$ and the other one for $n\in \overline{Q}$. 

This representation also includes the trivial bipartitions, and since we usually want to exclude them, let us denote representations of the trivial bipartition as
\begin{equation}
\phi_{T_{0}}=\sum_{n=1}^{N}0 e_{n},\qquad \phi_{T_{1}}=\sum_{n=1}^{N} e_{n},
\end{equation}
where $T_0$ and $T_1$ correspond to the cases $Q=\emptyset$ and $\overline{Q}=\emptyset$, respectively.

The last ingredient needed to rephrase Corollary \ref{cor_ge} in this vector formalism is a function $h(\cdot,\cdot):$ $F_{N,d}\times F_{N,2}\rightarrow \mathbb{Z}_{d}$:
\begin{equation}\label{eq_h_def}
h(v,\phi)=\sum_{n=1}^{N}v^{(n)}\phi^{(n)}.
\end{equation}
This function resembles a scalar product in the sense that it is calculated similarly, however $h(\cdot,\cdot)$ does not meet the necessary conditions that a scalar product has to obey.

We are finally ready to reformulate Corollary \ref{cor_ge} in terms of vectors from $K(\mathbb{S})$.
\begin{lem}\label{lem_ge}
Let the local dimension $d$ be a prime number. Given a stabilizer $\mathbb{S}=\langle G_{1},\dots, G_{k}\rangle$, the corresponding stabilizer subspace $V\neq\{0\}$ is genuinely entangled iff for every $\phi\in F_{N,2}$ such that $\phi\neq \phi_{T_{0}},\phi_{T_{1}}$ there exists $v_{i,j}\in K(\mathbb{S})$ for $i\neq j$ such that
\begin{equation}
h(v_{i,j},\phi)\neq 0.
\end{equation}
\end{lem}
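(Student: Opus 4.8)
The plan is to translate Corollary \ref{cor_ge} into the vector language and show that the two formulations are literally equivalent, bipartition by bipartition. The key observation is that the integer $\tau_{i,j}^{(Q)}$ appearing in the anticommutator computation of Theorem \ref{thm_sep} is exactly recovered by the function $h$. Indeed, since each generator factorises as a tensor product of single-qudit operators $\omega^{r}X^{n}Z^{m}$, the restriction $G_{i}^{(Q)}$ picks up only the single-site factors with $n\in Q$, and the commutation relation \eqref{eq_xz_zx} forces the phase accumulated between $G_{i}^{(Q)}$ and $G_{j}^{(Q)}$ to be $\omega$ raised to the sum of the site-wise exponents $\tau_{i,j}^{(n)}$ over $n\in Q$. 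In the vector formalism this sum is precisely $\sum_{n\in Q}\tau_{i,j}^{(n)}=\sum_{n=1}^{N}\tau_{i,j}^{(n)}\phi^{(n)}=h(v_{i,j},\phi)$, where $\phi$ is the representation of $Q|\overline{Q}$. So the first step I would carry out is to verify this identity
\begin{equation}\label{eq_plan_key}
G_{i}^{(Q)}G_{j}^{(Q)}=\omega^{h(v_{i,j},\phi)}G_{j}^{(Q)}G_{i}^{(Q)},
\end{equation}
by splitting the single-site Pauli factors into those inside $Q$ and those inside $\overline{Q}$ and applying \eqref{eq_xz_zx} on each site separately.

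With \eqref{eq_plan_key} established, the second step is immediate: $[G_{i}^{(Q)},G_{j}^{(Q)}]=0$ holds if and only if $\omega^{h(v_{i,j},\phi)}=1$, which, because $\omega$ is a primitive $d$th root of unity and $h(v_{i,j},\phi)\in\mathbb{Z}_{d}$, happens if and only if $h(v_{i,j},\phi)=0$. Contrapositively, $[G_{i}^{(Q)},G_{j}^{(Q)}]\neq 0$ is equivalent to $h(v_{i,j},\phi)\neq 0$. Thus the existence of a pair $i,j$ witnessing non-commutation across $Q|\overline{Q}$ in Corollary \ref{cor_ge} is the same as the existence of a generating vector $v_{i,j}$ with $h(v_{i,j},\phi)\neq 0$.

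The third step addresses the slight mismatch in quantifiers: Corollary \ref{cor_ge} speaks of the vectors $v_{i,j}$ attached to a fixed generating set, whereas the lemma ranges over all $v_{i,j}\in K(\mathbb{S})$, i.e. over the whole span. Here I would note that $h(\cdot,\phi)$ is $\mathbb{Z}_{d}$-linear in its first argument, so if every generating vector $v_{i,j}$ satisfies $h(v_{i,j},\phi)=0$ then so does every linear combination, and conversely a nonzero value on some element of the span forces a nonzero value on at least one generating $v_{i,j}$. Consequently the two existential statements over generators and over $K(\mathbb{S})$ coincide for each fixed $\phi$. I would also record that the two representations $\phi$ of a given bipartition give the same $h$-value up to the global relation \eqref{eq_vij_com}, so excluding $\phi_{T_{0}},\phi_{T_{1}}$ matches the restriction to nontrivial bipartitions in Corollary \ref{cor_ge}.

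I expect the main obstacle to be the careful bookkeeping in step one, namely confirming that the phase picked up by the restricted operators really is the site-wise sum $\sum_{n\in Q}\tau_{i,j}^{(n)}$ and nothing more. One must check that the leading scalar factors $\omega^{r}$ drop out of the commutator entirely, that the exponents $\tau_{i,j}^{(n)}$ defined in \eqref{eq_g_com} through the full single-site operators $G_{i}^{(n)}$ are the correct local contributions, and that restricting to $Q$ genuinely truncates the sum rather than altering individual terms. Once \eqref{eq_plan_key} is secured, the remaining steps are purely formal consequences of the primitivity of $\omega$ and the linearity of $h$, so the whole argument reduces to invoking Corollary \ref{cor_ge} after this identification.
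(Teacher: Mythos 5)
Your proposal is correct and takes essentially the same route as the paper's own proof: both translate Corollary \ref{cor_ge} into the vector formalism via the identification $\tau_{i,j}^{(Q)}=\sum_{n\in Q}\tau_{i,j}^{(n)}=\sum_{n=1}^{N}\tau_{i,j}^{(n)}\phi^{(n)}=h(v_{i,j},\phi)$, so that $[G_{i}^{(Q)},G_{j}^{(Q)}]\neq 0$ is equivalent to $h(v_{i,j},\phi)\neq 0$ for the representation $\phi$ of $Q|\overline{Q}$. Your additional bookkeeping (the site-wise phase accumulation, the primitivity of $\omega$, and the span-versus-generators quantifier handled by $\mathbb{Z}_{d}$-linearity of $h$) merely makes explicit what the paper's proof leaves implicit.
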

The proof is very similar to the proof of Theorem \ref{thm_sep} and so we do not present it here. However, for completeness we provide it in Appendix \ref{app_lem}.

The above lemma gives us a clearer picture of the relations that generators $G_{i}$ have to fulfil in order for a stabilizer subspace to be genuinely entangled. But notably in Ref. \cite{Makuta_2021} this lemma only serves as a stepping stone to a proof of the following theorem.
\begin{thm}[\textbf{for} $\boldsymbol{d=2}$]\label{thm_d=2}
A subspace $V$ stabilized by $\mathbb{S}$
is genuinely entangled iff $\dim K(\mathbb{S})=N-1$.
\end{thm}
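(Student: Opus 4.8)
The plan is to reduce the statement to a clean fact in linear algebra over the field $\mathbb{Z}_2$ (a field since $2$ is prime), using Lemma \ref{lem_ge} as the bridge. For $d=2$ the domain $F_{N,d}$ coincides with $F_{N,2}$, so the pairing $h(\cdot,\cdot)$ of \eqref{eq_h_def} becomes the ordinary dot product on $\mathbb{Z}_2^N$, which is symmetric and nondegenerate. First I would restate the criterion of Lemma \ref{lem_ge} in terms of the orthogonal complement
$K(\mathbb{S})^{\perp}=\{\phi\in F_{N,2}:h(v,\phi)=0 \text{ for all } v\in K(\mathbb{S})\}$. The condition ``for every nontrivial $\phi$ there is some $v_{i,j}\in K(\mathbb{S})$ with $h(v_{i,j},\phi)\neq0$'' says exactly that no nontrivial bipartition vector is orthogonal to all of $K(\mathbb{S})$, i.e. $K(\mathbb{S})^{\perp}\subseteq\{\phi_{T_{0}},\phi_{T_{1}}\}$.

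Second I would observe that the reverse inclusion holds unconditionally. Indeed $\phi_{T_{0}}=0\in K(\mathbb{S})^{\perp}$ trivially, while the commutation constraint \eqref{eq_vij_com}, namely $\sum_{n}v^{(n)}=0$ for every $v\in K(\mathbb{S})$, is precisely the statement $h(v,\phi_{T_{1}})=0$; hence $\phi_{T_{1}}\in K(\mathbb{S})^{\perp}$ as well. Combining this with the previous paragraph, $V$ is genuinely entangled iff $K(\mathbb{S})^{\perp}=\{\phi_{T_{0}},\phi_{T_{1}}\}$. Since $\{\phi_{T_{0}},\phi_{T_{1}}\}$ is the one-dimensional subspace spanned by $\phi_{T_{1}}\neq\phi_{T_{0}}$, and $K(\mathbb{S})^{\perp}$ is always a subspace containing it, this equality is equivalent to $\dim K(\mathbb{S})^{\perp}=1$.

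Finally I would invoke the standard dimension identity $\dim K(\mathbb{S})+\dim K(\mathbb{S})^{\perp}=N$, valid because $h$ is nondegenerate on $\mathbb{Z}_2^{N}$, to conclude that $\dim K(\mathbb{S})^{\perp}=1$ is equivalent to $\dim K(\mathbb{S})=N-1$, which is the claim. I expect the point requiring the most care to be the justification that for $d=2$ the form $h$ is a genuine nondegenerate symmetric bilinear form on $\mathbb{Z}_2^N$ — the paper emphasises that $h$ is \emph{not} a scalar product in general — so that the orthogonal-complement dimension formula legitimately applies; the remainder is bookkeeping. A useful consistency remark to include is that, since $h(v,\phi_{T_{1}})=0$ for all $v\in K(\mathbb{S})$, one has $h(v,\phi)=h(v,\phi+\phi_{T_{1}})$, so the criterion is insensitive to which of the two equivalent representations of a bipartition is chosen.
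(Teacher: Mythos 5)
Your proof is correct, and it takes a genuinely different route from the paper. In fact the paper never proves Theorem \ref{thm_d=2} itself --- it is imported from Ref.~\cite{Makuta_2021} --- and what the paper does prove is only the ``if'' half, generalized to arbitrary prime $d$: there the argument is a counting one, in which Lemma \ref{lem_dim} shows that when $\dim K(\mathbb{S})=N-1$ the subspace $K(\mathbb{S})$ consists of \emph{all} $d^{N-1}$ vectors with vanishing coordinate sum, after which an explicit witness $v=(d-1)e_{n_0}+e_{n_1}\in K(\mathbb{S})$ with $h(v,\phi)\neq 0$ is produced for each nontrivial bipartition $\phi$. Your argument replaces both steps by duality: you identify the criterion of Lemma \ref{lem_ge} with the statement $K(\mathbb{S})^{\perp}=\{\phi_{T_0},\phi_{T_1}\}$, the unconditional inclusion $\{\phi_{T_0},\phi_{T_1}\}\subseteq K(\mathbb{S})^{\perp}$ being precisely the commutation constraint \eqref{eq_vij_com}, and then invoke $\dim K(\mathbb{S})+\dim K(\mathbb{S})^{\perp}=N$. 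This buys you both implications in one stroke, and the point you flag as delicate is indeed fine: over $\mathbb{Z}_2$ one cannot write $F_{N,2}=K(\mathbb{S})\oplus K(\mathbb{S})^{\perp}$ (isotropic vectors exist), but the dimension identity needs only nondegeneracy, via the isomorphism $\phi\mapsto h(\cdot,\phi)$ of $F_{N,2}$ onto its dual, under which $K(\mathbb{S})^{\perp}$ maps onto the annihilator of $K(\mathbb{S})$. One step worth making explicit: Lemma \ref{lem_ge} quantifies over the spanning vectors $v_{i,j}$, whereas $K(\mathbb{S})^{\perp}$ is defined against all of $K(\mathbb{S})$; the two conditions agree by linearity of $h$ in its first argument. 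The comparison is also instructive about scope: your duality argument is intrinsically restricted to $d=2$, because only then do the bipartition representatives form a subspace of the same space $F_{N,d}$ in which $K(\mathbb{S})$ lives; for $d>2$ they are a mere subset of $F_{N,d}$, the orthogonal-complement bookkeeping has no analogue, and indeed the paper's $d=3$ example $\mathbb{S}=\langle X\otimes X\otimes X,\, Z\otimes Z\otimes Z\rangle$ shows the ``only if'' direction genuinely fails there --- while the paper's counting proof of the ``if'' direction, unlike yours, survives for every prime $d$.
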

This theorem provides a more convenient criterion than Corollary \ref{cor_ge} for determining if a stabilizer subspace is genuinely entangled. In fact, 
checking whether a stabilizer subspace generated by a stabilizer $\mathbb{S}$
is genuinely entangled boils down to computing the dimension of the corresponding subspace $K(\mathbb{S})$ which can easily be implemented in standard mathematical packages. However, the above theorem does not directly generalize to the case of arbitrary prime $d>2$. More precisely, while the "$\Leftarrow$" implication remains true, the other one does not hold anymore because there exist genuinely entangled subspaces for which the dimension of the corresponding subspaces $K(\mathbb{S})$ is lower than $N-1$. In what follows we provide an example of such a subspace. We also prove the sufficient condition for a stabilizer subspace to be genuinely entangled for any $d$, generalizing the "if part" of Theorem \ref{thm_d=2} to stabilizer subspace of arbitrary local dimension $d$.

To demonstrate that Theorem \ref{thm_d=2} cannot be directly generalized to an arbitrary $d$ let us consider the following example of stabilizers for $d=3$ and $N=3$:
\begin{equation}
G_{1}=X\otimes X\otimes X, \quad G_{2}=Z\otimes Z\otimes Z,
\end{equation}
\begin{equation}
G'_{1}=X\otimes Z\otimes \mathbb{1},\quad G'_{2}=Z\otimes X\otimes \mathbb{1}.
\end{equation}
It is easy to see that the corresponding stabilizers $\mathbb{S}=\langle G_{1},G_{2}\rangle$ and $\mathbb{S}'=\langle G'_{1}, G'_{2}\rangle$ stabilise three-dimensional subspaces in $\mathbb{C}_3^{\otimes 3}$, which we denote respectively by $V$ and $V'$. Using Corollary \ref{cor_ge} we can show that $V$ is genuinely entangled whereas $V'$ is not. At the same time, $\dim K(\mathbb{S})=\dim K(\mathbb{S}')=1$. This is thus certainly a nail in the coffin 
for the possibility of a direct generalisation of Theorem \ref{thm_d=2} to higher dimensional cases because it is clear that $\dim K(\mathbb{S})$ alone does not determine in general whether $V$ is genuinely entangled as it does for $d=2$.

While it clearly follows from the above considerations that 
it is in general not possible to formulate an iff condition for a stabilizer subspace to be genuinely entangled based solely on the dimension of $K(\mathbb{S})$ (as it is the case for $d=2$), we believe that the latter can still be somewhat useful in assessing genuine entanglement of stabilizer subspaces. While the latter condition implies in particular that $G_i^d=\mathbbm{1}$ for every generator, Actually, we conjecture that the following necessary condition holds true.
\begin{con}\label{con_ge}
Given a stabilizer $\mathbb{S}=\langle G_{1},\dots,G_{k}\rangle$, if the local dimension $d$ is a prime number and the corresponding stabilizer subspace $V$ is genuinely entangled, then
\begin{equation}\label{eq_dim_ks}
\dim K(\mathbb{S}) \geqslant \left\lceil \frac{N-1}{d-1} \right\rceil.
\end{equation}
\end{con}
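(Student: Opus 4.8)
The plan is to translate the genuine-entanglement criterion of Lemma~\ref{lem_ge} into a purely combinatorial zero-sum problem over $\mathbb{Z}_{d}$, and then to invoke the known value of the Davenport constant of $\mathbb{Z}_{d}^{m}$. Set $m=\dim K(\mathbb{S})$ and fix a basis $v_{1},\dots,v_{m}$ of $K(\mathbb{S})$. Assemble the ``column'' vectors $w_{n}=(v_{1}^{(n)},\dots,v_{m}^{(n)})\in\mathbb{Z}_{d}^{m}$ for $n=1,\dots,N$. Because $h(\cdot,\phi)$ is $\mathbb{Z}_{d}$-linear in its first argument, a $0/1$ vector $\phi$ representing a bipartition $Q|\overline{Q}$ satisfies $h(v_{i,j},\phi)=0$ for every $v_{i,j}\in K(\mathbb{S})$ precisely when $h(v_{l},\phi)=0$ on the whole basis, i.e. when $\sum_{n\in Q}w_{n}=0$ in $\mathbb{Z}_{d}^{m}$. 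By Lemma~\ref{lem_ge}, $V$ fails to be genuinely entangled exactly when there is a \emph{proper nonempty} subset $Q\subsetneq\{1,\dots,N\}$ with $\sum_{n\in Q}w_{n}=0$; equivalently, genuine entanglement is the statement that the sequence $w_{1},\dots,w_{N}$ admits no such zero-sum subset.

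This is a zero-sum question about the finite sequence $w_{1},\dots,w_{N}$ in the group $\mathbb{Z}_{d}^{m}$, and the relevant tool is the Davenport constant $D(\mathbb{Z}_{d}^{m})$, the least length that forces a nonempty zero-sum subsequence. For prime $d$, Olson's theorem gives the exact value $D(\mathbb{Z}_{d}^{m})=m(d-1)+1$; only the upper bound $D(\mathbb{Z}_{d}^{m})\le m(d-1)+1$ is actually needed, and this is precisely where primality of $d$ enters, since it rests on a Chevalley--Warning / polynomial-method argument that breaks down for composite $d$.

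I would then argue by contraposition. Suppose $\dim K(\mathbb{S})=m$ with $m(d-1)\le N-2$, i.e. $N-1\ge m(d-1)+1=D(\mathbb{Z}_{d}^{m})$. Discard a single index, say $N$, and apply Olson's theorem to the length-$(N-1)$ sequence $w_{1},\dots,w_{N-1}$: it contains a nonempty subset $Q\subseteq\{1,\dots,N-1\}$ with $\sum_{n\in Q}w_{n}=0$. Such a $Q$ is automatically nonempty and proper (it omits $N$), so its $0/1$ indicator vector $\phi$ represents a nontrivial bipartition with $h(v,\phi)=0$ for every $v\in K(\mathbb{S})$; by Lemma~\ref{lem_ge}, $V$ is not genuinely entangled. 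Contrapositively, genuine entanglement forces $N\le m(d-1)+1$, i.e. $m\ge (N-1)/(d-1)$, and since $m$ is an integer this is exactly $\dim K(\mathbb{S})\ge\lceil (N-1)/(d-1)\rceil$.

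The reduction itself is elementary; the genuine content, and the main obstacle for a self-contained proof, is the tight upper bound on $D(\mathbb{Z}_{d}^{m})$. One must also handle the off-by-one at the threshold with care: it is properness of $Q$ that forces us to discard one index before applying Olson's theorem, and it is exactly this discarded index that turns $D(\mathbb{Z}_{d}^{m})=m(d-1)+1$ into the ceiling $\lceil (N-1)/(d-1)\rceil$ rather than a bound one unit larger. As consistency checks, the $d=2$ case gives $D(\mathbb{Z}_{2}^{m})=m+1$ and hence $\dim K(\mathbb{S})\ge N-1$, matching the ``only if'' direction of Theorem~\ref{thm_d=2}, while the worked $d=3$, $N=3$ example saturates the bound $\lceil 2/2\rceil=1$.
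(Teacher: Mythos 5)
Your proof is correct, and it accomplishes strictly more than the paper does: the statement is posed there only as Conjecture~\ref{con_ge}, it is proved in the paper solely for $k=2$ (Theorem~\ref{Raimat}), and the general case is explicitly listed in the conclusions as the main open problem. Your reduction is sound: since $h(\cdot,\phi)$ is $\mathbb{Z}_d$-linear in its first argument, vanishing of $h$ on the spanning set $\{v_{i,j}\}$ of $K(\mathbb{S})$ is equivalent to vanishing on your chosen basis $v_1,\dots,v_m$, so Lemma~\ref{lem_ge} indeed says that genuine entanglement fails precisely when some proper nonempty $Q\subsetneq\{1,\dots,N\}$ satisfies $\sum_{n\in Q}w_n=0$ in $\mathbb{Z}_d^m$. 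Your handling of the discarded index is not just careful but essential: the full set $Q=\{1,\dots,N\}$ is \emph{always} zero-sum by Eq.~(\ref{eq_vij_com}), so applying the Davenport bound to all $N$ columns would yield nothing; restricting to $w_1,\dots,w_{N-1}$ guarantees properness, and the threshold $N-1\geqslant m(d-1)+1=D(\mathbb{Z}_d^m)$ is exactly the negation of $\dim K(\mathbb{S})\geqslant\lceil(N-1)/(d-1)\rceil$, so the contraposition closes with no loss. The contrast with the paper's route is instructive: the authors prove (Lemma~\ref{lem_2^(-(d-1))}) by a site-by-site recursion that a single equation $h(u,\phi)=\beta$, once solvable, has at least $2^{N-(d-1)}$ solutions $\phi\in F_{N,2}$; genuine entanglement caps the solutions of $h(v_{1,2},\phi)=0$ at the two trivial ones, forcing $N-1\leqslant d-1$, which is the $k=2$ case. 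They then observe that the same recursion for $m$ independent vectors degrades to $2^{N-(d^{m}-1)}$, which is too weak — this is precisely where their method stalls. Your approach succeeds because Chevalley--Warning, applied to the $m$ polynomials $\sum_n w_n^{(l)}x_n^{d-1}$ of total degree $m(d-1)$, treats all $m$ congruences simultaneously and so delivers the additive dependence on $m$ that the one-vector-at-a-time counting cannot reach; the price is importing an external result, though as you note only the upper bound $D(\mathbb{Z}_d^m)\leqslant m(d-1)+1$ is needed and its polynomial-method proof is short enough to keep the argument self-contained. Two small points to make explicit in a write-up: $V\neq\{0\}$ holds automatically for any stabilizer (the projector $|\mathbb{S}|^{-1}\sum_{G\in\mathbb{S}}G$ has positive trace), so Lemma~\ref{lem_ge} applies; and Lemma~\ref{lem_ge} is phrased in terms of the generating vectors $v_{i,j}$ rather than arbitrary elements of $K(\mathbb{S})$, so the spanning-set-versus-basis equivalence deserves the one line of justification you implicitly use.
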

For the particular case of $k=2$ and any $N$ and $d$ the above statement can actually be rigorously proven. Precisely, the following theorem holds true. 
\begin{thm}\label{Raimat}
Consider a stabilizer $\mathbb{S}=\langle G_1,G_2\rangle$. If the local dimension $d$ is a prime number and the corresponding 
subspace $V$ generated by $\mathbb{S}$ is genuinely entangled, then 
\begin{equation}\label{eq_dimks}
    \dim K(\mathbb{S})\geqslant \left\lceil \frac{N-1}{d-1} \right\rceil.
\end{equation}
\end{thm}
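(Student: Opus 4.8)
The plan is to exploit the fact that a two-generator stabilizer produces an essentially one-dimensional $K(\mathbb{S})$, so that the whole inequality collapses to a statement about a single vector. First I would observe that with $k=2$ the only vectors among the $v_{i,j}$ of (\ref{eq_vij_def}) are $v_{1,1}=v_{2,2}=0$ and $v_{2,1}=-v_{1,2}$, so that $K(\mathbb{S})=\mathrm{span}\{v_{1,2}\}$ and $\dim K(\mathbb{S})\in\{0,1\}$. For $N\ge 2$ (the case $N=1$ being trivial since then the right-hand side of (\ref{eq_dimks}) is $0$), genuine entanglement requires by Lemma \ref{lem_ge} that for some nontrivial $\phi$ there be a vector in $K(\mathbb{S})$ with nonzero $h$, which forces $v_{1,2}\neq 0$ and hence $\dim K(\mathbb{S})=1$. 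Consequently the bound (\ref{eq_dimks}) is equivalent to $\lceil (N-1)/(d-1)\rceil\le 1$, i.e. to the single arithmetic fact $N\le d$. The theorem therefore reduces to showing that a genuinely entangled two-generator stabilizer subspace cannot exist when $N>d$.

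Next I would recast genuine entanglement in purely combinatorial terms. Writing $v:=v_{1,2}$ with entries $v^{(n)}\in\mathbb{Z}_d$, the function $h$ of (\ref{eq_h_def}) evaluated on the representation $\phi$ of a bipartition $Q|\overline{Q}$ is simply $h(v,\phi)=\sum_{n\in Q}v^{(n)}$. Since $K(\mathbb{S})$ is spanned by $v$ and the candidate vectors for $i\neq j$ are just $\pm v$, Lemma \ref{lem_ge} says that $V$ is genuinely entangled iff $\sum_{n\in Q}v^{(n)}\neq 0 \pmod d$ for every proper nonempty subset $Q\subset\{1,\dots,N\}$. This is consistent with the commutation constraint (\ref{eq_vij_com}), which is exactly the statement that the full sum $\sum_{n=1}^{N}v^{(n)}$ vanishes.

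Finally I would close the argument with a pigeonhole estimate on prefix sums. Set $s_0=0$ and $s_m=\sum_{n=1}^{m}v^{(n)}$ for $1\le m\le N-1$. If $N>d$ these are at least $d+1$ elements of the $d$-element set $\mathbb{Z}_d$, so two coincide, say $s_i=s_j$ with $0\le i<j\le N-1$; then $Q=\{i+1,\dots,j\}$ is nonempty and does not contain $N$, hence proper, yet $\sum_{n\in Q}v^{(n)}=s_j-s_i=0$, contradicting genuine entanglement. Therefore $N\le d$, giving $\dim K(\mathbb{S})=1\ge\lceil(N-1)/(d-1)\rceil$ as claimed. Equivalently, deleting one entry of $v$ leaves a zero-sum-free sequence over $\mathbb{Z}_d$, whose length is bounded by the Davenport constant $D(\mathbb{Z}_d)=d$, yielding $N-1\le d-1$.

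I expect the only genuinely conceptual step to be the reduction in the first paragraph, namely the recognition that $k=2$ forces $\dim K(\mathbb{S})\le 1$ and that Lemma \ref{lem_ge} then translates genuine entanglement into the zero-sum-free condition on the entries of $v$; after that the pigeonhole bound is routine. The one point requiring care is ensuring that the subset produced by the coincidence $s_i=s_j$ is \emph{proper}, which is precisely why I restrict the prefix sums to $s_0,\dots,s_{N-1}$ and use $\sum_n v^{(n)}=0$ to absorb the otherwise-allowed full-set zero sum.
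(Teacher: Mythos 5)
Your proof is correct, and it takes a genuinely different route from the paper's. Both arguments start the same way: $k=2$ forces $K(\mathbb{S})=\mathrm{span}\{v_{1,2}\}$, Lemma \ref{lem_ge} plus genuine entanglement forces $v_{1,2}\neq 0$, so $\dim K(\mathbb{S})=1$ and the claim reduces to $N\leqslant d$. The paper then invokes its Lemma \ref{lem_2^(-(d-1))}, proved by a somewhat delicate recursive count: whenever $h(u,\phi)=\beta$ has a solution $\phi\in F_{N,2}$ it has at least $2^{N-(d-1)}$ of them; since genuine entanglement leaves only the two trivial solutions $\phi_{T_{0}},\phi_{T_{1}}$ of $h(v_{1,2},\phi)=0$, this gives $2\geqslant 2^{N-(d-1)}$, i.e., $N\leqslant d$. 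You instead obtain $N\leqslant d$ by pigeonhole on the prefix sums $s_0,\dots,s_{N-1}$ (the classical Davenport-constant bound $D(\mathbb{Z}_d)=d$), and you handle the properness issue correctly by excluding $s_N$, which equals $s_0$ by \eqref{eq_vij_com}. Your route is shorter, avoids the recursion entirely, and its combinatorial core does not even need $d$ to be prime; the paper's counting lemma, on the other hand, is quantitative and was intended as a template for attacking Conjecture \ref{con_ge} at arbitrary $k$ (the appendix discusses why it fails to scale). In fact, your framing appears to scale better: for a general stabilizer, Lemma \ref{lem_ge} together with linearity of $h(\cdot,\phi)$ says precisely that the ``columns'' $c_n=\bigl(u_1^{(n)},\dots,u_r^{(n)}\bigr)\in\mathbb{Z}_d^{r}$ built from a basis $u_1,\dots,u_r$ of $K(\mathbb{S})$ satisfy $\sum_{n\in Q}c_n\neq 0$ for every nonempty proper $Q\subset\{1,\dots,N\}$, so that $c_1,\dots,c_{N-1}$ is a zero-sum-free sequence in $\mathbb{Z}_d^{r}$; Olson's theorem on the Davenport constant of $p$-groups, $D(\mathbb{Z}_p^{r})=r(p-1)+1$, then yields $N-1\leqslant r(d-1)$, which is Conjecture \ref{con_ge} in full generality.
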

\begin{proof}
The proof is quite technical and therefore it is deferred to Appendix \ref{ap_con}.
\end{proof}

Furthermore, in Appendix \ref{ap_gmax} we construct a stabilizer for which $V$ is genuinely entangled and $\dim K(\mathbb{S})=\lceil (N-1)/(d-1)\rceil$; we denote this stabilizer $\mathbb{S}_{\max}$, where we added the subscript '$\max$' for reasons that will become clear later. This implies that if the inequality (\ref{eq_dim_ks}) is in general false, the true function $p(N,d)$ that bounds $\dim K(\mathbb{S})$ from below, that is, one for which $\dim K(\mathbb{S})\geqslant  p(N,d)$ for any genuinely entangled subspace, must satisfy $p(N,d)\leqslant \lceil (N-1)/(d-1)\rceil$ for any $N$ and $d$.

It is worth pointing out the vector formalism introduced above turned out to be particularly useful in constructing the stabilizer $\mathbb{S}_{\max}$ presented in Appendix \ref{ap_gmax}. Precisely, the generators of this stabilizer are chosen so that the corresponding vectors $v_{i,j}$ give us the following basis in $K(\mathbb{S})$:
\begin{equation}\label{eq_basis}
u_{i}=\left\{ 
\begin{array}{ll}
\displaystyle\sum_{j=(i-1)(d-1)+1}^{i(d-1)+1}e_{j}& \textrm{ for\;} i < \displaystyle\left\lceil \frac{N-1}{d-1} \right\rceil, \\[5ex]
\displaystyle\sum_{j=(i-1)(d-1)+1}^{N-1}e_{j}+ a e_{N}& \textrm{ for\;}i = \displaystyle\left\lceil \frac{N-1}{d-1} \right\rceil,
\end{array}
\right.
\end{equation}
where $a=\lceil (N-1)/(d-1) \rceil(d-1)-(N-1)+1$. For example, for $d=3$ and $N=5$ our construction gives
\begin{equation}
u_{1}=(1,1,1,0,0),\qquad u_{2}=(0,0,1,1,1).
\end{equation}

To complete our characterization of genuine entanglement within the stabilizer formalism let us finally demonstrate that the '$\Leftarrow$' implication of Theorem \ref{thm_d=2} generalizes to any prime $d$. 
To this end let us first prove the following lemma.
\begin{lem}\label{lem_dim}
Assume $d$ to be prime and consider a stabilizer for which $\dim K(\mathbb{S})=N-1$. Then, a vector $v\in F_{N,d}$ fulfils
\begin{equation}\label{eq_sum_v_0}
\sum_{i=1}^{N}v^{(i)}=0,
\end{equation}
if and only if $v\in K(\mathbb{S})$.
\end{lem}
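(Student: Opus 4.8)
The plan is to recognize both the solution set of Eq. \eqref{eq_sum_v_0} and the space $K(\mathbb{S})$ as linear subspaces of $F_{N,d}$ and then to prove that they coincide by a pure dimension count. This is legitimate precisely because $d$ is prime, so that $\mathbb{Z}_d$ is a field and $F_{N,d}=\mathbb{Z}_d^N$ is a genuine $N$-dimensional vector space over $\mathbb{Z}_d$; over such a space the principle ``a subspace contained in another of the same finite dimension must equal it'' is available, which is the whole engine of the argument.

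First I would introduce the map $L:F_{N,d}\to\mathbb{Z}_d$ given by $L(v)=\sum_{n=1}^{N}v^{(n)}$ and observe that it is $\mathbb{Z}_d$-linear and not identically zero (for instance $L(e_1)=1$). Hence its kernel $W:=\{v\in F_{N,d}:\sum_{n}v^{(n)}=0\}$ is a subspace of codimension one, i.e. $\dim W=N-1$. This $W$ is exactly the set of vectors satisfying Eq. \eqref{eq_sum_v_0}, so the lemma is the assertion that $W=K(\mathbb{S})$ under the hypothesis $\dim K(\mathbb{S})=N-1$.

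Next I would establish the inclusion $K(\mathbb{S})\subseteq W$. By definition $K(\mathbb{S})$ is spanned by the vectors $v_{i,j}$ of Eq. \eqref{eq_vij_def}, and the mutual commutation of all generators forces each such $v_{i,j}$ to obey Eq. \eqref{eq_vij_com}, i.e. $v_{i,j}\in W$. Since $W$ is a subspace, it is closed under $\mathbb{Z}_d$-linear combinations, so the span $K(\mathbb{S})$ of the $v_{i,j}$ lies entirely in $W$. Note that this inclusion already yields the ``if'' direction of the stated equivalence (every $v\in K(\mathbb{S})$ satisfies \eqref{eq_sum_v_0}); remarkably it uses none of the dimension hypothesis.

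Finally, combining the inclusion $K(\mathbb{S})\subseteq W$ with the hypothesis $\dim K(\mathbb{S})=N-1=\dim W$ forces $K(\mathbb{S})=W$, which gives the converse implication and completes the proof: $v\in K(\mathbb{S})$ iff $v\in W$ iff $\sum_n v^{(n)}=0$. I do not expect any genuinely hard step here; the argument is essentially one line of linear algebra once the functional $L$ is written down. The only point deserving care, and the one I would flag explicitly, is the appeal to the field structure of $\mathbb{Z}_d$: primality of $d$ is what guarantees that $F_{N,d}$ is a vector space and that equal dimension together with inclusion implies equality, a conclusion that could fail if $d$ were composite and $\mathbb{Z}_d$ merely a ring.
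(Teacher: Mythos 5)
Your proof is correct and takes essentially the same route as the paper: both rest on the inclusion $K(\mathbb{S})\subseteq\{v\in F_{N,d}:\sum_{i}v^{(i)}=0\}$ (which is just Eq.~\eqref{eq_vij_com} plus linearity of the constraint) followed by a size comparison that hinges on $d$ being prime. The only cosmetic difference is that the paper counts elements explicitly---$d^{N-1}$ solutions of \eqref{eq_sum_v_0} versus $d^{N-1}$ distinct combinations $\sum_{i}a_{i}u_{i}$ of a basis of $K(\mathbb{S})$---whereas you compare dimensions via the kernel of the linear functional $L$; over the field $\mathbb{Z}_{d}$ these are the same argument.
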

The proof can be found in Appendix \ref{app_lem}. With this lemma at hand we can prove the aforementioned implication. 
\begin{thm}
Given a prime $d$ and a stabilizer $\mathbb{S}=\langle G_{1},\dots, G_{k}\rangle$, if $\dim K(\mathbb{S})=N-1$ then the corresponding stabilizer subspace $V$ is genuinely entangled. 
\end{thm}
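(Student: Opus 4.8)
The plan is to deduce the statement from Lemma~\ref{lem_ge}. Since $\mathbb{S}$ is a genuine stabilizer (its elements commute and it contains $a\mathbbm{1}$ only for $a=1$), the stabilized subspace is automatically nonzero, so Lemma~\ref{lem_ge} applies and reduces the claim to the following: for every $\phi\in F_{N,2}$ with $\phi\neq\phi_{T_0},\phi_{T_1}$ there is a generating vector $v_{i,j}\in K(\mathbb{S})$ with $i\neq j$ and $h(v_{i,j},\phi)\neq 0$. First I would record that $h(\cdot,\phi)$ is $\mathbb{Z}_d$-linear in its first argument and that $K(\mathbb{S})$ is by definition spanned by the vectors $v_{i,j}$. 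Consequently, if all generators satisfied $h(v_{i,j},\phi)=0$, linearity would force $h(\cdot,\phi)$ to vanish on all of $K(\mathbb{S})$; conversely, any $v\in K(\mathbb{S})$ with $h(v,\phi)\neq 0$, expanded as a combination of the $v_{i,j}$, must have at least one summand with $h(v_{i,j},\phi)\neq 0$. Hence it suffices to exhibit, for each nontrivial $\phi$, a single (not necessarily generating) vector $v\in K(\mathbb{S})$ with $h(v,\phi)\neq 0$.

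Next I would invoke the hypothesis $\dim K(\mathbb{S})=N-1$ together with Lemma~\ref{lem_dim}, which identifies $K(\mathbb{S})$ with the hyperplane $\{v\in F_{N,d}:\sum_{n=1}^{N}v^{(n)}=0\}$. Producing the required $v$ then amounts to finding some $v\in F_{N,d}$ with $\sum_n v^{(n)}=0$ yet $h(v,\phi)=\sum_n v^{(n)}\phi^{(n)}\neq 0$. Fixing a nontrivial $\phi$ and setting $Q=\{n:\phi^{(n)}=1\}$, which is a nonempty proper subset of $\{1,\dots,N\}$, I pick $n_1\in Q$ and $n_2\in\overline{Q}$ and define $v$ by $v^{(n_1)}=1$, $v^{(n_2)}=d-1$, and $v^{(n)}=0$ otherwise. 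Then $\sum_n v^{(n)}=d\equiv 0 \pmod d$, so $v\in K(\mathbb{S})$ by Lemma~\ref{lem_dim}, while $h(v,\phi)=v^{(n_1)}=1\neq 0$ because $n_2\notin Q$.

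Combining the two observations, for every nontrivial bipartition representative $\phi$ there is an element of $K(\mathbb{S})$ on which $h(\cdot,\phi)$ is nonzero, hence a distinguished generating vector $v_{i,j}$ with $h(v_{i,j},\phi)\neq 0$; Lemma~\ref{lem_ge} then yields that $V$ is genuinely entangled. I do not expect a real obstacle here, as the entire content collapses to the one-line construction once Lemmas~\ref{lem_ge} and~\ref{lem_dim} are in hand. The only point demanding a little care is the linearity reduction of the first paragraph, which is what legitimately converts an arbitrary witness $v\in K(\mathbb{S})$ into one of the specific generating vectors $v_{i,j}$ required by the statement of Lemma~\ref{lem_ge}.
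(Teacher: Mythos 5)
Your proof is correct and follows essentially the same route as the paper's: both reduce the claim to Lemma~\ref{lem_ge}, use Lemma~\ref{lem_dim} to identify $K(\mathbb{S})$ with the sum-zero hyperplane in $F_{N,d}$, and exhibit the very same witness vector $v=(d-1)e_{n_0}+e_{n_1}$ (with $n_1\in Q$, $n_0\in\overline{Q}$) for each nontrivial bipartition. Your explicit linearity step, which converts an arbitrary witness $v\in K(\mathbb{S})$ into one of the generating vectors $v_{i,j}$ demanded by Lemma~\ref{lem_ge}, is a detail the paper leaves implicit, and making it explicit renders the application of that lemma fully rigorous.
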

\begin{proof}
To prove that the stabilizer subspace in genuinely entangled we use the criterion from Lemma \ref{lem_ge}. Let us consider a representation $\phi$ of nontrivial bipartition $Q|\overline{Q}$. Without any loss of generality we can assume that $n_{0}\in \overline{Q}$ and $n_{1}\in Q$. Consider then a vector 
$v=(d-1)e_{n_{0}}+e_{n_{1}}$ for which we have $h(v,\phi)\neq 0$. 
Moreover, this vector clearly satisfies the condition (\ref{eq_sum_v_0})
which together with the fact that $\dim K(\mathbb{S})=N-1$
allows one to deduce \textit{via} Lemma \ref{lem_dim} that $v\in K(\mathbb{S})$. Now, such a vector can be found for every nontrivial $\phi$ and every such vector is in $K(\mathbb{S})$, therefore by the virtue of Lemma \ref{lem_ge} the stabilizer subspace $V$ is genuinely entangled. 
\end{proof}

Let us conclude this section by wrapping up the main statements made in this section and picturing how the fact whether $V$ is genuinely entangled or not depends on the dimension of $K(\mathbb{S})$. Assuming that $d$ is prime, we can divide the possible values of $\dim K(\mathbb{S})$ into three regions: 
\begin{itemize}
\item[(i)] for $\dim K(\mathbb{S})=N-1$ the corresponding stabilizer subspace is 
genuinely entangled, 
\item[(ii)]  for $\dim K(\mathbb{S})<N-1$ but larger or equal to some number $p(N,d)$, which we suspect to be $\lceil (N-1)/(d-1)\rceil$, we are unable to determine whether $V$ is genuinely entangled or not just by looking at the dimension of $K(\mathbb{S})$,
\item[(iii)]  for $\dim K(\mathbb{S})$ smaller that $p(N,d)$, $V$ cannot be genuinely entangled. However, since we were not able to prove Conjecture \ref{con_ge}, the actual value of $p(N,d)$ remains unknown.
\end{itemize}

\subsection{The maximal dimension of genuinely entangled stabilizer subspaces}

Interestingly, the stabilizer $\mathbb{S}_{\max}$ introduced in Appendix \ref{ap_gmax} and discussed above stabilises a genuinely entangled subspace in 
$\mathbb{C}_d^{\otimes N}$ whose dimension seems to be the maximal achievable within the stabilizer formalism for prime $d$. While the maximal dimension of a genuinely entangled stabilizer subspace for any $N$ and $d$
is unknown, we can speculate on it based on Conjecture \ref{con_ge}. Indeed, we can formulate another conjecture towards this goal. 
\begin{con}\label{con_dim}
If $d$ is prime then the maximal dimension of genuinely entangled, stabilizer subspace is given by $d^{N-k_{\min}(N)}$, where 
\begin{equation}\label{eq_k_min}
k_{\min}(N,d)=\left\lceil \frac{1+\sqrt{1+8\lceil (N-1)/(d-1)\rceil}}{2} \right\rceil.
\end{equation}
\end{con}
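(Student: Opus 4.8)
# Proof Proposal for Conjecture \ref{con_dim}

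The plan is to establish the formula \eqref{eq_k_min} by connecting two quantities: the number $k$ of independent generators of a stabilizer and the dimension of the associated subspace $K(\mathbb{S})$. Recall from the preliminaries that for prime $d$ the stabilizer subspace has dimension $\dim V = d^{N-k}$, so maximizing $\dim V$ is equivalent to minimizing $k$. The strategy is therefore to determine the smallest $k$ for which a genuinely entangled stabilizer subspace can exist, and show this minimum equals $k_{\min}(N,d)$ as given in \eqref{eq_k_min}.

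First I would count the vectors $v_{i,j}$ available to us. Given $k$ generators, the vectors $v_{i,j}$ of \eqref{eq_vij_def} are indexed by unordered pairs $i \neq j$ (since $v_{j,i} = -v_{i,j}$ carries the same linear information), yielding at most $\binom{k}{2} = k(k-1)/2$ potentially independent spanning vectors for $K(\mathbb{S})$. Hence
\begin{equation}\label{eq_dimbound}
\dim K(\mathbb{S}) \leqslant \binom{k}{2} = \frac{k(k-1)}{2}.
\end{equation}
Next I would invoke Conjecture \ref{con_ge} (or, where it applies rigorously, Theorem \ref{Raimat}): genuine entanglement forces $\dim K(\mathbb{S}) \geqslant \lceil (N-1)/(d-1)\rceil$. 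Combining this lower bound with \eqref{eq_dimbound} gives the key inequality $k(k-1)/2 \geqslant \lceil (N-1)/(d-1)\rceil$, which I would then solve for $k$ by treating it as a quadratic in $k$. Writing $M := \lceil (N-1)/(d-1)\rceil$, the condition $k^2 - k - 2M \geqslant 0$ yields $k \geqslant (1+\sqrt{1+8M})/2$, and since $k$ is an integer the smallest admissible value is exactly the ceiling appearing in \eqref{eq_k_min}. This shows $k \geqslant k_{\min}(N,d)$, giving the upper bound $\dim V \leqslant d^{N-k_{\min}(N,d)}$ on the dimension of any genuinely entangled stabilizer subspace.

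The matching lower bound on the dimension I would supply by exhibiting an explicit construction achieving equality. Here I would appeal to the stabilizer $\mathbb{S}_{\max}$ of Appendix \ref{ap_gmax}, whose associated $K(\mathbb{S}_{\max})$ has the basis displayed in \eqref{eq_basis} and satisfies $\dim K(\mathbb{S}_{\max}) = \lceil (N-1)/(d-1)\rceil$. The remaining task is to verify that $\mathbb{S}_{\max}$ can be generated by precisely $k_{\min}(N,d)$ independent generators whose pairwise vectors $v_{i,j}$ span this $K$ of the required dimension, thereby saturating \eqref{eq_dimbound} and confirming that the bound is tight.

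The main obstacle, which I expect to prevent a fully rigorous proof, is twofold. The upper-bound direction rests on Conjecture \ref{con_ge}, which is proven only for $k=2$ via Theorem \ref{Raimat}; without it one can only establish the weaker statement using whatever true lower bound $p(N,d)$ holds in place of $\lceil(N-1)/(d-1)\rceil$. The subtler difficulty is that inequality \eqref{eq_dimbound} is only an upper bound on $\dim K(\mathbb{S})$: it is not automatic that a set of generators simultaneously minimizing $k$ can arrange its $\binom{k}{2}$ vectors $v_{i,j}$ to be linearly independent \emph{and} satisfy Lemma \ref{lem_ge} for every nontrivial bipartition. Demonstrating that the counting bound and the genuine-entanglement criterion are compatible at the extremal value of $k$ — so that the construction in Appendix \ref{ap_gmax} genuinely realizes $k_{\min}$ rather than merely some $k$ slightly larger — is precisely the delicate combinatorial point, and is the reason this dimension formula is stated as a conjecture rather than a theorem.
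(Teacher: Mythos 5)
Your proposal matches the paper's own argument essentially step for step: the paper likewise proves the statement conditionally on Conjecture~\ref{con_ge}, combining $\dim V = d^{N-k}$ for prime $d$ with the counting bound $\dim K(\mathbb{S}) \leqslant k(k-1)/2$ and the conjectured lower bound $\dim K(\mathbb{S}) \geqslant \lceil (N-1)/(d-1)\rceil$, solving the resulting quadratic to obtain $k_{\min}(N,d)$, and appealing to the construction $\mathbb{S}_{\max}$ of Appendix~\ref{ap_gmax} for achievability. Your closing remarks correctly identify why this remains a conjecture rather than a theorem, in agreement with the paper's own framing.
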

Let us now provide a proof that Conjecture \ref{con_ge} implies Conjecture \ref{con_dim}.
\begin{proof}Assume that Conjecture \ref{con_ge} is true. We first use the fact, proven in Ref. \cite{GHEORGHIU2014505}, that for a given stabilizer $\mathbb{S}=\langle G_{1},\dots,G_{k}\rangle$, the dimension of $V$ equals $d^{N-k}$ 
provided that $d$ is prime. Let us then follow the argumentation of 
the proof of Theorem 3 in Ref. \cite{Makuta_2021}. Namely, $k_{\min}(N,d)$ in Eq. (\ref{eq_k_min}) is the smallest positive integer that fulfils the following inequality 
\begin{equation}\label{eq_n-1/d-1}
\frac{1}{2}k(k-1)\geqslant \left\lceil \frac{N-1}{d-1} \right\rceil
\end{equation}
which relates the lower bound on the dimension of 
$K(\mathbb{S})$ in (\ref{eq_dim_ks}) and the number of 
vectors $v_{i,j}$ corresponding to the stabilizer 
$\mathbb{S}=\langle G_{1},\dots,G_{k}\rangle$. 
Indeed, for a stabilizer subspace $V$ to 
be genuinely entangled that number of vectors 
must at least equal the minimal dimension of $K(\mathbb{S})$
in Conjecture \ref{con_ge}.
\end{proof}

Since this conjecture is a direct consequence of Conjecture \ref{con_ge} it follows that the stabilizer constructed in Appendix \ref{ap_gmax} gives rise to a genuinely entangled subspace of the maximal dimension postulated in Conjecture \ref{con_dim}. Thus, the corresponding subspace $V_{\max}$ (see Appendix \ref{ap_gmax}) is the one that we were looking for. 

Let us also notice that if the true function $p(N,d)$ that bounds the dimension of $K(\mathbb{S})$ from below in (\ref{eq_dim_ks}) turns out to be different than 
$\lceil (N-1)/(d-1)\rceil$, the reasoning presented in the proof of Conjecture \ref{con_dim} remains valid, one only needs to replace the inequality (\ref{eq_n-1/d-1}) by $k(k-1)/2\geqslant p(N,d)$ and modify accordingly $k_{\min}(N,d)$ in Eq. (\ref{eq_k_min}).

\section{Fully NPT subspaces and genuine entanglement}
\label{NPT}

Our aim in the last section of the paper is to explore the relation between genuine multipartite entanglement in the stabilizer formalism and the entanglement criterion based on partial transposition. We show that partial transposition can be used to formulate an iff criterion for stabilizer subspaces to be genuinely entangled. Precisely, a mixed state supported on a stabilizer subspace is 
genuinely entangled if and only if all its partial transpositions are non-positive. 
This implies a quite unexpected fact that genuinely entangled, stabilizer subspaces, independently of how large they are, do not support mixed states whose partial transpositions are all positive. In order to achieve this goal we exploit the very simple approach used in Ref. \cite{PhysRevA.61.062102} to prove that no bipartite PPT entangled states violate the Clauser-Horne-Shimony-Holt Bell inequality. 

Before stating our results let us recall
that we say that a stabilizer subspace $V$ is entangled with respect to a bipartition
$Q|\overline{Q}$ if every pure state belonging to it
(and thus every density matrix supported on it)
is entangled across $Q|\overline{Q}$.
We then call $V$ a non-positive partial transpose (NPT) subspace with respect to $Q|\overline{Q}$ if $\rho^{T_Q}\ngeq 0$ for every mixed state $\rho$ supported on it (i.e., such that $\mathrm{supp}(\rho)\subseteq V$).

\begin{thm}\label{thm_npt}
Consider a stabilizer $\mathbb{S}=\langle G_{1},\dots, G_{k}\rangle$ of an arbitrary local dimension and a bipartition $Q|\overline{Q}$. The stabilizer subspace $V$ of $\mathbb{S}$ is entangled with respect to $Q|\overline{Q}$ iff it is NPT with respect to that bipartition. 
\end{thm}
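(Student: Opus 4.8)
The plan is to prove the two implications separately, with essentially all of the work going into showing that entanglement across $Q|\overline{Q}$ forces NPT. The reverse implication, that NPT implies entanglement, is immediate: any state separable across $Q|\overline{Q}$ has positive partial transpose, so if every mixed state supported on $V$ is NPT, then none of them — in particular none of the pure states in $V$ — can be separable across $Q|\overline{Q}$, which is exactly what it means for $V$ to be entangled across that cut.

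For the forward direction I would start from Theorem \ref{thm_sep}: if $V$ is entangled across $Q|\overline{Q}$ there is a pair of generators $G_i,G_j$ whose restrictions to $Q$ fail to commute, $[G_i^{(Q)},G_j^{(Q)}]\neq0$. Since these restrictions are tensor products of generalised Paulis, non-commutation means $G_i^{(Q)}G_j^{(Q)}=\omega^{\tau}G_j^{(Q)}G_i^{(Q)}$ with $\tau\not\equiv0\pmod d$, and commutation of the full generators forces the conjugate phase $\omega^{-\tau}$ on the $\overline{Q}$ side (this is precisely condition \eqref{eq_vij_com}). Now take any normalised $\rho$ with $\mathrm{supp}(\rho)\subseteq V$, set $\sigma:=\rho^{T_Q}$, and assume for contradiction that $\sigma\geqslant0$. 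The engine of the argument — the technique borrowed from Ref. \cite{PhysRevA.61.062102} — is the identity $\mathrm{Tr}(O\sigma)=\mathrm{Tr}(O^{T_Q}\rho)$, valid for every operator $O$, which lets one trade a partial transpose on the state for one on the observable.

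Applying this to the unitaries $U:=G_i^{T_Q}$ and $W:=G_j^{T_Q}$ (their partial transposes factorise by \eqref{eq_g_bip}, so they remain unitary), and using that $\rho$ is stabilised (so $G\rho=\rho$ for every $G\in\mathbb{S}$), I would compute $\mathrm{Tr}(U\sigma)=\mathrm{Tr}(G_i\rho)=1$ and likewise $\mathrm{Tr}(W\sigma)=1$, while $\mathrm{Tr}(\sigma)=1$. The key step is then the equality condition for a unitary $U$ against a density matrix $\sigma\geqslant0$: $\mathrm{Tr}(U\sigma)=1$ can hold only if $\sigma$ is supported on the eigenvalue-one eigenspace of $U$, i.e. $U\sigma=\sigma$; the same gives $W\sigma=\sigma$. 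Consequently $UW\sigma=\sigma$ and $\mathrm{Tr}(UW\sigma)=1$. On the other hand, transposing the product back and collecting the phases from the two sides yields $(UW)^{T_Q}=\omega^{-\tau}G_iG_j$, so $\mathrm{Tr}(UW\sigma)=\omega^{-\tau}\mathrm{Tr}(G_iG_j\rho)=\omega^{-\tau}$. Since $\tau\not\equiv0$, we have $\omega^{-\tau}\neq1$, contradicting $\mathrm{Tr}(UW\sigma)=1$. Hence $\sigma\not\geqslant0$, so $\rho$ is NPT across $Q|\overline{Q}$; as $\rho$ was arbitrary, $V$ is NPT.

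I expect the main obstacle to be the middle step — turning the numerical constraints $\mathrm{Tr}(U\sigma)=\mathrm{Tr}(W\sigma)=1$ into the operator identities $U\sigma=\sigma$ and $W\sigma=\sigma$. This is the only place where positivity of $\sigma$ is genuinely used, through the equality case of $|\mathrm{Tr}(U\sigma)|\leqslant\mathrm{Tr}(\sigma)$ for unitary $U$, and it is what allows the single non-trivial phase $\omega^{-\tau}$ to produce an outright contradiction. The remaining difficulty is purely bookkeeping: tracking the phases $\omega^{\pm\tau}$ as operators are partially transposed and commuted past one another across the tensor factors. Everything else is routine.
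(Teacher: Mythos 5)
Your proof is correct, and the forward direction runs on a genuinely different engine than the paper's, even though both share the same skeleton: the backward implication via the Peres criterion, Theorem~\ref{thm_sep} to extract a pair of generators with $G_{i}^{(Q)}G_{j}^{(Q)}=\omega^{\tau}G_{j}^{(Q)}G_{i}^{(Q)}$, $\tau\neq 0$, the identity $\operatorname{tr}\bigl(O\rho^{T_{Q}}\bigr)=\operatorname{tr}\bigl(O^{T_{Q}}\rho\bigr)$, and the stabilization conditions $G\rho=\rho G=\rho$. The paper follows Ref.~\cite{PhysRevA.61.062102} literally: it builds the Bell-type operator $\mathcal{B}=G_{i}+G_{i}^{\dagger}+G_{j}+G_{j}^{\dagger}$, uses positivity of $\rho^{T_{Q}}$ through a variance inequality applied to $\mathcal{B}^{T_{Q}}$, computes $\mathcal{B}^{2}$ and $\bigl(\mathcal{B}_{Q}^{2}\bigr)^{T_{Q}}$ explicitly, and arrives at the quantitative bound $\operatorname{tr}(\mathcal{B}\rho)\leqslant 2\sqrt{3+\cos\phi}<4$, contradicting $\operatorname{tr}(\mathcal{B}\rho)=4$. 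You instead use positivity of $\sigma=\rho^{T_{Q}}$ only through the equality case of $|\operatorname{tr}(U\sigma)|\leqslant\operatorname{tr}(\sigma)$ for unitary $U$: since $\operatorname{tr}\bigl(G_{i}^{T_{Q}}\sigma\bigr)=\operatorname{tr}(G_{i}\rho)=1$, a spectral decomposition of $\sigma$ plus Cauchy--Schwarz forces $G_{i}^{T_{Q}}\sigma=\sigma$ and $G_{j}^{T_{Q}}\sigma=\sigma$, hence $\operatorname{tr}\bigl(G_{i}^{T_{Q}}G_{j}^{T_{Q}}\sigma\bigr)=1$, while transposing back gives $\bigl(G_{i}^{T_{Q}}G_{j}^{T_{Q}}\bigr)^{T_{Q}}=\omega^{-\tau}G_{i}G_{j}$ and therefore the value $\omega^{-\tau}\neq 1$ for the same trace --- a clean phase contradiction; I verified the bookkeeping ($G_{i}^{T_{Q}}$ remains unitary, and the product picks up exactly $\omega^{-\tau}$ under $T_{Q}$), and it is sound. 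What each approach buys: yours is shorter and more structural, avoiding the computation of $\mathcal{B}^{2}$ and the cosine bound, and it isolates the key obstruction --- a PPT state supported on $V$ would have to be a common fixed point of the partially transposed stabilizer, which cannot exist when the $Q$-parts fail to commute; the paper's version, in exchange, produces an explicit Bell-operator-style witness with a quantitative gap ($\operatorname{tr}(\mathcal{B}\rho)\leqslant 2\sqrt{3+\cos\phi}$ for every PPT state versus $4$ on $V$), which is of independent interest, e.g.\ for nonlocality-flavoured applications.
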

\begin{proof}
The '$\Leftarrow$' implication simply follows from a well known fact that non-positive partial transpose is a sufficient condition for entanglement \cite{PhysRevLett.77.1413}, and so we only need to prove the "$\Rightarrow$" implication. Let us assume that $V$ is entangled with respect to the bipartition $Q|\overline{Q}$, but it is not NPT. This implies that there exists a state $\rho$, where $\operatorname{supp}(\rho)\subset V$, such that 
\begin{equation}\label{PPTQ}
\rho^{T_{Q}}\geqslant 0.
\end{equation}
The fact that $V$ is entangled across $Q|\overline{Q}$ implies, by virtue of Theorem \ref{thm_sep}, that there exists a pair of generators of $\mathbb{S}$, $G_{i}$ and $G_{j}$, such that:
\begin{equation}\label{eq_gq}
    G_{i}^{(Q)} G_{j}^{(Q)}=\omega^{\tau_{i,j}^{(Q)}}G_{j}^{(Q)}G_{i}^{(Q)},
\end{equation}
where $\tau_{i,j}^{(Q)}\neq 0$. Let us introduce an operator $\mathcal{B}$ as a sum of these two generators $\mathbb{S}$ and their Hermitian conjugates:
\begin{equation}\label{eq_B_def}
    \mathcal{B}= G_{i}+G_{i}^{\dagger}+G_{j}+G_{j}^{\dagger}.
\end{equation}
Now, following the approach Ref. \cite{PhysRevA.61.062102} we derive an upper bound on the expectation value of the operator $\mathcal{B}$ for any state that obeys (\ref{PPTQ}):
\begin{eqnarray}\label{eq_B_bound}
  [  \operatorname{tr}(\mathcal{B}\rho)]^{2}&=&\frac{1}{2}[\operatorname{tr}(\mathcal{B}\rho)]^{2}+\frac{1}{2}\left[\operatorname{tr}(\mathcal{B}^{T_{Q}}\rho^{T_{Q}})\right]^{2}\nonumber\\
    &\leqslant& \frac{1}{2}\operatorname{tr}(\mathcal{B}^{2}\rho)
    +\frac{1}{2}\operatorname{tr}(\mathcal{B}_{Q}^2\rho^{T_Q})\nonumber\\
    &=&\frac{1}{2}\operatorname{tr}\left\{\left[\mathcal{B}^{2}+\left(\mathcal{B}_{Q}^2\right)^{T_Q}\right]\rho\right\},
\end{eqnarray}
where $\mathcal{B}_{Q}=\mathcal{B}^{T_{Q}}$ and to obtain the first and the third equations we have used the fact that under the trace one can add partial transposition to both operators. Then, to obtain the inequality we exploited the fact that the variance of a quantum observable is non-negative, together with the fact that $\rho^{T_{Q}}\geqslant 0$. In order to determine the right-hand side of (\ref{eq_B_bound}) we use the explicit forms of the operators $\mathcal{B}^2$ and $\left(\mathcal{B}^2_{Q}\right)^{T_Q}$,
\begin{eqnarray}
\mathcal{B}^{2}&&=G_{i}^{2}+G_{i}^{\dagger 2}+G_{j}^{2}+G_{j}^{\dagger 2}+4\mathbb{1}+2g_{i,j},
\end{eqnarray}
\begin{eqnarray}
\left(\mathcal{B}^2_{Q}\right)^{T_Q}&&= G_{i}^{2}+ G_{i}^{2\dagger}+G_{j}^{2}+G_{j}^{2\dagger}+4\mathbb{1}+2\cos \left(\phi\right)g_{i,j},\nonumber\\
\end{eqnarray}
where $\phi=2\pi\tau_{i,j}^{(Q)}/d$ and
\begin{equation}
g_{i,j}=G_{i}G_{j}+G_{i}G_{j}^{\dagger}+G_{i}^{\dagger}G_{j}+G_{i}^{\dagger}G_{j}^{\dagger},
\end{equation}
and so we have:
\begin{eqnarray}
[\operatorname{tr}(\mathcal{B}\rho)]^{2}\leqslant&& \operatorname{tr}\Big[\rho\Big( G_{i}^{2}+G_{i}^{2\dagger}+G_{j}^{2}+G_{j}^{2\dagger}\nonumber\\
&&\hspace{1cm}+4\mathbb{1}+\left(1+\cos(\phi)\right)g_{i,j} \Big)\Big].
\end{eqnarray}
Then, taking into account the fact that 
\begin{equation}\label{Gcond}
    G_i\rho=\rho G_i
\end{equation}
holds true for any $i$,
the above simplifies to 
\begin{eqnarray}\label{contradiction}
\operatorname{tr}(\mathcal{B}\rho)&\leqslant& 2\sqrt{3+\cos(\phi)},\nonumber\\
&<&4,
\end{eqnarray}
where the last inequality is a consequence of the fact that $\tau_{i,j}^{(Q)}\neq 0$, which means that $\cos (\phi)<1$. Thus, for any mixed state stabilized on $V$ obeying (\ref{PPTQ}), the expectation value of $\mathcal{B}$ is strictly smaller than four. 

On the other hand, due to Eq. (\ref{Gcond}) every density matrix $\rho$ for which $\operatorname{supp}(\rho)\subset V$ satisfies
\begin{equation}
\operatorname{tr}(\mathcal{B}\rho)=4
\end{equation}
which clearly contradicts (\ref{contradiction}), completing the proof.
\end{proof}
This theorem shows that in the qudit stabilizer formalism subspaces in which 
all states are entangled with respect to some bipartition cannot support entangled states that are PPT with respect to that bipartition. Importantly, based on it one can formulate the following iff criterion for a stabilizer subspace to be genuinely entangled.
\begin{cor}\label{cor_npt}
A stabilizer subspace $V$ is genuinely entangled iff it is fully NPT.
\end{cor}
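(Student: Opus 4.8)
The plan is to deduce this corollary directly from Theorem \ref{thm_npt}, which already establishes the equivalence between entanglement and NPT-ness on a single, fixed bipartition. The only work remaining is to promote that per-bipartition equivalence to the two global notions in the statement, namely "genuinely entangled" and "fully NPT," both of which are by definition quantified over all nontrivial bipartitions.

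First I would unfold the definitions recalled in the Preliminaries. A stabilizer subspace $V$ is genuinely entangled precisely when every pure state in it is entangled across every nontrivial bipartition $Q|\overline{Q}$, that is, exactly when $V$ is entangled with respect to each such $Q|\overline{Q}$. Symmetrically, $V$ is fully NPT precisely when it is NPT with respect to each nontrivial bipartition, i.e.\ when $\rho^{T_Q}\ngeq 0$ for every $\rho$ with $\mathrm{supp}(\rho)\subseteq V$ and every $Q|\overline{Q}$. Thus both hypotheses are conjunctions, over all nontrivial bipartitions, of a single-bipartition property.

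The core step is then to fix an arbitrary nontrivial bipartition $Q|\overline{Q}$ and invoke Theorem \ref{thm_npt}: $V$ is entangled across $Q|\overline{Q}$ if and only if it is NPT across $Q|\overline{Q}$. Since this biconditional holds separately for each bipartition, I would conjoin it over all of them to obtain that $V$ is entangled across every nontrivial bipartition if and only if $V$ is NPT across every nontrivial bipartition. Chaining this with the two definitional rewrites from the previous paragraph yields the desired equivalence: genuinely entangled $\Leftrightarrow$ entangled across all nontrivial bipartitions $\Leftrightarrow$ NPT across all nontrivial bipartitions $\Leftrightarrow$ fully NPT.

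I do not expect any substantive obstacle here, as Theorem \ref{thm_npt} has already done all the analytic work and the corollary is a purely logical repackaging. The only point warranting a moment of care is that the quantifier "for every mixed state supported on $V$" is nested inside both notions; I would therefore rely on the fact that Theorem \ref{thm_npt} is phrased at the level of the whole subspace rather than of an individual state, so that the universal quantification over bipartitions commutes cleanly with the universal quantification over supported states and no quantifier ordering issue arises.
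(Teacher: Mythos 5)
Your proof is correct and follows essentially the same route as the paper: unfold the definitions of "genuinely entangled" and "fully NPT" as conjunctions over all nontrivial bipartitions, then apply Theorem \ref{thm_npt} bipartition by bipartition and conjoin. The paper phrases its converse direction slightly differently (every pure state fully NPT $\Rightarrow$ genuinely entangled), but this is the same underlying argument, so there is nothing to add.
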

This statement follows from the fact that if $V$ is genuinely entangled then 
it is entangled with respect to any bipartition. Then, application of Theorem \ref{thm_npt} to every bipartition implies that $V$ is fully NPT. On the other hand, if every pure state from $V$ is fully NPT, it must also be genuinely entangled and thus $V$ is genuinely entangled itself.

This result provides us with a convenient way of constructing multipartite subspaces that are fully NPT: by virtue of Corollary \ref{cor_npt} any genuinely entangled stabilizer subspace does the job. At the same time, it implies a quite surprising fact that genuinely entangled stabilizer subspaces do not support genuinely entangled states that are PPT with respect to any bipartition, which makes us believe that there are actually no PPT genuinely entangled states in the stabilizer formalism. Let us notice, on the other hand, that while Theorem \ref{thm_npt} provides us a method for detecting entanglement in the stabilizer formalism, the latter role is, however, better served by Theorem \ref{thm_sep}.

\section{Conclusions}
We have shown that all genuinely entangled stabilizer subspaces are fully NPT, or, equivalently, that such subspaces do not support PPT entangled states. To this end, we have introduced a criterion allowing one to judge in a finite number of steps whether a given stabilizer subspace in an $N$-qudit Hilbert space is genuinely entangled. We have also generalized the vector formalism introduced
in Ref. \cite{Makuta_2021} in the case of qubit stabilizer formalism for prime local dimension and
used it to construct genuinely entangled stabilizer subspaces which we conjecture to have the maximal dimension achievable within the stabilizer formalism. We have thus provided examples of genuinely entangled fully NPT subspaces, extending in a way the results of Ref. \cite{PhysRevA.87.064302} to the multipartite scenario.

Our work leaves many open questions that might serve as inspiring starting points for further research. The most obvious one is whether the first conjecture stated in our paper can be proven rigorously. On the other hand, if this conjecture fails to be true, an alternative question would be whether one can find the true function $p(N,d)$ that should appear in Ineq. (\ref{eq_dimks}) and that can later be used to determine the maximal dimension of genuinely entangled stabilizer subspaces. Another related question is whether one can provide a general construction of fully NPT genuinely entangled subspaces beyond the stabilizer formalism that are of maximal dimension; it was show in Ref. \cite{PhysRevA.87.064302} that completely entangled subspaces that are NPT can be as large as any completely entangled subspaces. While we believe our construction to be of maximal dimension achievable within the stabilizer formalism, at least for prime $d$, it is most probably suboptimal when one takes into account arbitrary genuinely entangled subspaces. On the other hand, it would be highly interesting to follow the results of Sec. \ref{NPT} and further explore the relation between entanglement in the stabilizer formalism and the separability criterion based on partial transposition. While we prove that genuinely entangled stabilizer subspaces do not support PPT states, it is at the moment unclear whether there are no PPT genuinely entangled stabilizer states at all. In fact, there might exist stabilizer subspaces that are not genuinely entangled and yet support PPT genuinely entangled states. It would thus be highly interesting to explore whether there any PPT entangled states in this formalism, or, alternatively whether partial transposition is a necessary and sufficient condition for separability for stabilizer states. The last avenue for further research is to construct Bell inequalities that are maximally violated by genuinely entangled stabilizer subspaces and see whether they can be used for self-testing, generalizing the results Refs. \cite{PhysRevLett.125.260507,Makuta_2021}.

\section{acknowledgements}
This work was supported by the National Science Center (Poland) through the SONATA BIS grant no. 2019/34/E/ST2/00369.

\appendix
\section{Proofs of Lemmas \ref{lem_ge} and  \ref{lem_dim}}\label{app_lem}
\setcounter{lem}{0}

For completeness we provide here the proofs of 
Lemmas \ref{lem_ge} and \ref{lem_dim}.

\begin{lem}
Let the local dimension $d$ be prime. Given a stabilizer $\mathbb{S}=\langle G_{1},\dots, G_{k}\rangle$, the corresponding stabilizer subspace $V\neq\{0\}$ is genuinely entangled iff for every $\phi\in F_{N,2}$ such that $\phi\neq \phi_{T_{0}},\phi_{T_{1}}$ there exists $v_{i,j}\in K(\mathbb{S})$ for $i\neq j$ such that
\begin{equation}
h(v_{i,j},\phi)\neq 0.
\end{equation}
\end{lem}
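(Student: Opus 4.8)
The plan is to reduce the vector-formalism condition directly to the commutation criterion of Corollary \ref{cor_ge}. The crucial observation is that $h(v_{i,j},\phi)$ is nothing but the commutation phase of the restricted generators $G_i^{(Q)}$ and $G_j^{(Q)}$ across the bipartition $Q|\overline{Q}$ encoded by $\phi$. Once this identification is made, the equivalence in the lemma becomes a verbatim translation of Corollary \ref{cor_ge} into the language of the vectors $v_{i,j}$.

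First I would fix a nontrivial bipartition $Q|\overline{Q}$ together with the representation $\phi$ for which $\phi^{(n)}=1\Leftrightarrow n\in Q$. Writing each restricted generator as a tensor product over sites, $G_i^{(Q)}=\bigotimes_{n\in Q}G_i^{(n)}$, and using that operators on distinct sites commute while each single-site pair obeys \eqref{eq_g_com}, the local phases simply accumulate:
\[
G_i^{(Q)}G_j^{(Q)}=\omega^{\sum_{n\in Q}\tau_{i,j}^{(n)}}\,G_j^{(Q)}G_i^{(Q)}.
\]
By the definitions \eqref{eq_vij_def} and \eqref{eq_h_def} the exponent equals $h(v_{i,j},\phi)$, so the commutation phase satisfies $\tau_{i,j}^{(Q)}\equiv h(v_{i,j},\phi)\pmod d$. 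Since $\omega$ is a primitive $d$th root of unity, $[G_i^{(Q)},G_j^{(Q)}]\neq 0$ holds precisely when $\tau_{i,j}^{(Q)}\neq 0$, i.e.\ precisely when $h(v_{i,j},\phi)\neq 0$.

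With this identity in hand the lemma follows from Corollary \ref{cor_ge}: $V$ is genuinely entangled iff for every nontrivial bipartition there is a pair $i\neq j$ with $[G_i^{(Q)},G_j^{(Q)}]\neq 0$, which by the previous step is equivalent to the existence of a generating vector $v_{i,j}$ with $h(v_{i,j},\phi)\neq 0$. Vectors with $i=j$ vanish, since $G_i^{(n)}$ commutes with itself, so the restriction $i\neq j$ is automatic. The only point requiring care is matching the quantifier over bipartitions with the quantifier over representations $\phi$: each bipartition has two representations, $\phi$ and its complement $\phi'$, and by the global commutation relation \eqref{eq_vij_com} one has $h(v_{i,j},\phi)\equiv -h(v_{i,j},\phi')\pmod d$, so both representations yield the same (non)vanishing verdict; excluding $\phi_{T_0},\phi_{T_1}$ then corresponds exactly to excluding the trivial bipartitions. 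This bookkeeping, rather than any analytic difficulty, is the only obstacle, and I expect it to amount merely to checking that the trivial cases line up on both sides.
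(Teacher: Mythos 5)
Your proposal is correct and follows essentially the same route as the paper's own proof: both identify $h(v_{i,j},\phi)$ with the accumulated commutation phase $\tau_{i,j}^{(Q)}=\sum_{n\in Q}\tau_{i,j}^{(n)}$ of the restricted generators and then invoke Corollary \ref{cor_ge} to translate the commutation criterion into the vector language. Your additional bookkeeping (the two representations of each bipartition agreeing via \eqref{eq_vij_com}, and the automatic vanishing of $v_{i,i}$) is sound and, if anything, slightly more careful than the paper's write-up.
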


\begin{proof}
Corollary \ref{cor_ge} states that for every nontrivial bipartition $Q|\overline{Q}$ of the set $\{1,\dots,N\}$ there exists $i,j\in \{1,\dots,k\}$ such that
\begin{equation}
\left[G_{i}^{(Q)},G_{j}^{(Q)}\right]\neq 0.
\end{equation}
This can be equivalently expressed as
\begin{equation}\label{eq_tau}
\tau_{i,j}^{(Q)}\neq 0,
\end{equation}
where
\begin{equation}
\tau_{i,j}^{(Q)}=\sum_{n\in Q}\tau_{i,j}^{(n)}
\end{equation}
and $\tau_{i,j}^{(n)}$ is defined in Eq. (\ref{eq_g_com}). Given the representation $\phi$ of a bipartition $Q|\overline{Q}$, the above equation can be reformulated as
\begin{equation}
\tau_{i,j}^{(Q)}=\sum_{n\in Q}\tau_{i,j}^{(n)}=\sum_{n=1}^{N}\tau_{i,j}^{(n)}\phi^{(n)}.
\end{equation}
By the definitions of $v_{i,j}$ [cf. Eq. \eqref{eq_vij_def}] and of $h(\cdot,\cdot)$ (\ref{eq_h_def}), the above rewrites as
\begin{equation}
\tau_{i,j}^{(Q)}=\sum_{n=1}^{N}\tau_{i,j}^{(n)}\phi^{(n)}=h(v_{i,j},\phi),
\end{equation}
Consequently, from \eqref{eq_tau} it follows that $h(v_{i,j},\phi)\neq 0$ for all nontrivial $\phi$ and some $v_{i,j}\in K(\mathbb{S})$ iff $V$ is genuinely entangled, which ends the proof.
\end{proof}

\begin{lem}
Assume $d$ to be prime and consider a stabilizer for which $\dim K(\mathbb{S})=N-1$. Then, a vector $v\in F_{N,d}$ fulfils
\begin{equation}
\sum_{i=1}^{N}v^{(i)}=0,
\end{equation}
if and only if $v\in K(\mathbb{S})$.
\end{lem}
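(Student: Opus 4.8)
The plan is to recognise the condition $\sum_{i=1}^{N}v^{(i)}=0$ as defining one fixed subspace of $F_{N,d}$ and then to settle the lemma by a dimension count, which is legitimate precisely because $d$ is prime, so that $\mathbb{Z}_d$ is a field and $F_{N,d}$ a genuine vector space over it. Concretely, I would introduce the linear functional $\sigma:F_{N,d}\rightarrow\mathbb{Z}_d$ given by $\sigma(v)=\sum_{n=1}^{N}v^{(n)}$ and denote its kernel by $W=\{v\in F_{N,d}:\sigma(v)=0\}$. Since $\sigma$ is surjective (it sends $e_1$ to $1$), the rank--nullity theorem yields $\dim W=N-1$. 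The assertion to be proved is then exactly the set equality $K(\mathbb{S})=W$.

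The ``only if'' direction is immediate. By construction $K(\mathbb{S})$ is spanned by the vectors $v_{i,j}$, and the mutual commutation of all generators of $\mathbb{S}$ forces each spanning vector to obey the sum condition, i.e. $\sigma(v_{i,j})=0$, which is precisely Eq.~\eqref{eq_vij_com}. As $W$ is the kernel of a linear map it is closed under linear combinations, so every element of the span $K(\mathbb{S})$ again lies in $W$; this gives $K(\mathbb{S})\subseteq W$, and in particular any $v\in K(\mathbb{S})$ satisfies $\sum_i v^{(i)}=0$.

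For the ``if'' direction I would simply combine the inclusion $K(\mathbb{S})\subseteq W$ just established with the hypothesis $\dim K(\mathbb{S})=N-1=\dim W$. Over a field, a subspace contained in a finite-dimensional space and sharing its dimension must coincide with it; hence $K(\mathbb{S})=W$, and every $v$ with $\sigma(v)=0$ belongs to $K(\mathbb{S})$, as claimed.

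The point worth flagging---rather than a genuine obstacle---is that this final step rests entirely on the field structure of $\mathbb{Z}_d$, and it is exactly here that the primality of $d$ is used: for composite $d$ the coefficient ring has zero divisors, $F_{N,d}$ is only a module, ``dimension'' ceases to be a complete invariant, and a proper submodule can be generated by as few elements as a strictly larger one. Everything else in the argument is routine linear algebra over $\mathbb{Z}_d$.
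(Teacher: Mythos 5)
Your proposal is correct and follows essentially the same strategy as the paper's proof: both establish the inclusion $K(\mathbb{S})\subseteq W$ directly from the definition of $K(\mathbb{S})$ and Eq.~(\ref{eq_vij_com}), and then conclude equality by a size comparison that hinges on $d$ being prime. The only difference is cosmetic --- the paper explicitly counts $d^{N-1}$ elements in each set (using uniqueness of basis coefficients over $\mathbb{Z}_d$), whereas you compare dimensions via rank--nullity and the fact that a subspace of full dimension coincides with the ambient space; over a finite field these are the same argument.
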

\begin{proof}
Since the implication "$\Leftarrow$" follows from the definition of $K(\mathbb{S})$, in order to complete the proof it is sufficient to show that the number of vectors in $F_{N,d}$ fulfilling \eqref{eq_sum_v_0} equals the number of vectors in $K(\mathbb{S})$.

Let us start from calculating how many vectors in $F_{N,d}$ satisfy \eqref{eq_sum_v_0}. Clearly, for any such vector we have
\begin{equation}
 v^{(N)}=-\sum_{n=1}^{N-1}v^{(n)}.
\end{equation}
Notice that for any choice of $v^{(n)}$ for all $n\in\{1,\dots, N-1\}$ we can find exactly one $v^{(N)}$ that fulfils this equation. This implies that there are $d^{N-1}$ vectors $v\in F_{N,d}$ fulfilling \eqref{eq_sum_v_0}.

Next, let us calculate the number of vectors in $K(\mathbb{S})$. To this end let us consider the following expression
\begin{equation}
\sum_{i=1}^{N-1} a_{i}u_{i}= \sum_{i=1}^{N-1}b_{i}u_{i},
\end{equation}
where $\{u_{i}\}_{i=1}^{N-1}$ is a basis in $K(\mathbb{S})$ and $a_{i},b_{i}\in \mathbb{Z}_{d}$; notice that by definition each element of this basis must also satisfy (\ref{eq_sum_v_0}). Since $d$ is prime, the above is true only if $a_{i}=b_{i}$ for all $i\in \{1,\dots,N-1\}$. There are $d^{N-1}$ unique choices of $\{a_{i}\}_{i=1}^{N-1}$, hence there are $d^{N-1}$ vectors in $K(\mathbb{S})$, which ends the proof.
\end{proof}

\section{A proof of Theorem \ref{Raimat}}\label{ap_con}
Here we provide a proof of Theorem \ref{Raimat} or, equivalently, of Conjecture \ref{con_ge} for $k=2$ and arbitrary $d$ and $N$. If someone will attempt to prove Conjecture \ref{con_ge} in its most general form, we hope that the following proof will serve as guide for how the general proof could be formulated. The main idea of this proof is to treat bipartition representations $\phi$ as solutions to an equation:
\begin{equation}
h(u,\phi)=\beta,
\end{equation}
where $u\in F_{N,d}$ and $\beta\in\{0,\dots,d-1\}$. The first step in our procedure is to calculate the number of solutions $\phi$ to this equation depending on $d$ and $N$.
\begin{lem}\label{lem_2^(-(d-1))}
Let $u\in F_{N,d}$ and let us assume that an equation
\begin{equation}\label{eq_h_beta_simple}
h(u,\phi)=\beta,
\end{equation}
has at least one solution $\phi\in F_{N,2}$.
The number of $\phi$ that fulfil Eq. (\ref{eq_h_beta_simple}) is larger or equal than $2^{N-(d-1)}$.
\end{lem}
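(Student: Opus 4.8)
The plan is to reformulate $h(u,\phi)$ combinatorially and then count solutions by processing the coordinates one at a time. Writing $S=\{n:\phi^{(n)}=1\}$, the definition \eqref{eq_h_def} gives $h(u,\phi)=\sum_{n\in S}u^{(n)}\pmod d$, so the binary vectors $\phi$ solving $h(u,\phi)=\beta$ are in bijection with the subsets $S\subseteq\{1,\dots,N\}$ whose corresponding subset sum of the entries of $u$ equals $\beta$ modulo $d$. The lemma thus reduces to a statement about subset sums: if some subset of the entries of $u$ sums to $\beta$, then at least $2^{N-(d-1)}$ of them do.

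To prove this I would process the coordinates $n=1,\dots,N$ incrementally. For $0\leqslant k\leqslant N$ let $f_k(\beta)$ denote the number of subsets of $\{1,\dots,k\}$ whose $u$-sum is $\beta\pmod d$, let $A_k=\{\beta:f_k(\beta)>0\}$ be the set of achievable residues, and let $m_k=\min_{\beta\in A_k}f_k(\beta)$. Starting from $A_0=\{0\}$ and $m_0=1$, one has the recursion $f_k(\beta)=f_{k-1}(\beta)+f_{k-1}(\beta-u^{(k)})$. The core of the argument is a dichotomy at each step. If $u^{(k)}=0$, every count doubles, so $A_k=A_{k-1}$ and $m_k=2m_{k-1}$. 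If $u^{(k)}\neq 0$, then either $A_k\supsetneq A_{k-1}$ strictly, or $m_k\geqslant 2m_{k-1}$; in every case $m_k\geqslant m_{k-1}$, since for $\beta\in A_k$ at least one of $\beta,\beta-u^{(k)}$ lies in $A_{k-1}$.

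The key step, and the main obstacle, is the nonzero case, which is precisely where primality of $d$ enters. Here $A_k=A_{k-1}\cup(A_{k-1}+u^{(k)})$. If the translate $A_{k-1}+u^{(k)}$ differs from $A_{k-1}$, then since translation preserves cardinality neither set contains the other, so $A_k$ grows strictly. If instead $A_{k-1}$ is invariant under adding $u^{(k)}$, then $A_{k-1}$ is a union of cosets of the subgroup $\langle u^{(k)}\rangle$; as $d$ is prime and $u^{(k)}\neq 0$ this subgroup is all of $\mathbb{Z}_d$, forcing $A_{k-1}=\mathbb{Z}_d$ because it is nonempty (it contains $0$). In that case both $\beta$ and $\beta-u^{(k)}$ are achievable for every $\beta$, so $f_k(\beta)\geqslant 2m_{k-1}$ and the minimal count doubles. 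Identifying the achievable-residue set as the right invariant and exploiting primality to force it to be all of $\mathbb{Z}_d$ is the crux of the proof.

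Finally I would count the two kinds of steps. Since $A_0=\{0\}$ and $A_k\subseteq\mathbb{Z}_d$, the achievable set can grow strictly at most $d-1$ times over the entire process. Hence at most $d-1$ of the $N$ steps fail to at least double $m$, while every zero step and every nonzero invariance step doubles it; therefore $m$ at least doubles on no fewer than $N-(d-1)$ steps, giving $m_N\geqslant 2^{N-(d-1)}$. Because $\beta$ is achievable by hypothesis, the number of solutions is $f_N(\beta)\geqslant m_N\geqslant 2^{N-(d-1)}$, which is the claimed bound.
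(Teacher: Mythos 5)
Your proof is correct, and its skeleton coincides with the paper's: the same coordinate-by-coordinate recursion (your $f_k(\beta)=f_{k-1}(\beta)+f_{k-1}(\beta-u^{(k)})$ is the paper's Eq.~(\ref{eq_sigma_sum})), the same tracked quantity (the minimal nonzero count), and the same final count (at most $d-1$ exceptional steps, all others doubling). The genuine difference lies in how the crucial dichotomy is justified. The paper argues in the forward direction: if the minimum fails to double, some count $\sigma(n,\beta-u^{(n+1)})$ vanishes, and following the chain $\beta-\kappa u^{(n+1)}$ of vanishing counts (which must terminate within $d-1$ steps because $\sigma(n,\beta-du^{(n+1)})=\sigma(n,\beta)\neq 0$) produces a residue that is newly achievable at step $n+1$; it then closes the count with an informal claim that the order in which the two kinds of bounds occur does not matter. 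You instead argue the contrapositive set-theoretically, comparing $A_{k-1}$ with its translate: either they differ, forcing strict growth since translation preserves cardinality, or $A_{k-1}$ is translation-invariant, and primality then gives $A_{k-1}=\mathbb{Z}_d$ and simultaneous doubling of every count. Your route is tighter on two points the paper glosses over: you treat zero entries $u^{(k)}=0$ explicitly (the paper's base case ``for $n=1$ exactly two values of $\beta$ are achievable'' tacitly assumes $u^{(1)}\neq 0$), and the monotonicity $m_k\geqslant m_{k-1}$ combined with counting growth steps by cardinality replaces the paper's order-exchange heuristic. One small observation: primality is not actually needed in your invariance case --- if $A_{k-1}+u^{(k)}=A_{k-1}$, then for every $\beta\in A_k=A_{k-1}$ both $\beta$ and $\beta-u^{(k)}$ already lie in $A_{k-1}$, so doubling follows for arbitrary $d$; the coset argument is clean but overshoots, and in fact both proofs, read this way, establish the lemma without the primality assumption.
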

\begin{proof}
Let $\sigma (N,\beta)$ denote the number of $\phi\in F_{N,2}$ that fulfil:
\begin{equation}\label{eq_h_beta_2}
\sum_{i=1}^{N}u^{(i)}\phi^{(i)} \mod d=\beta,
\end{equation}
where the upper index denotes the $i$'th entry.

Rather than finding a lower bound on $\sigma(N,\beta)$ for each $\beta\in\{0,\dots,d-1\}$, we will find a minimum of $\sigma(N,\beta)$ over all $\beta\in\{0,\dots,d-1\}$. To this end, let us denote by $\sigma(n,\beta)$ the number of $\phi\in F_{n,2}$ that fulfil:
\begin{equation}\label{eq_h_beta_4}
\sum_{i=1}^{n}u^{(i)}\phi^{(i)} \mod d=\beta,
\end{equation}
where $n\in \{1,\dots,N\}$. We have introduced a new variable $n$, because in order to find a minimum of $\sigma(N,\beta)$ over $\beta$ we want to express $\sigma(n+1,\beta)$ in terms of $\sigma(n,\beta')$, namely
\begin{equation}\label{eq_sigma_sum}
\sigma (n+1,\beta)=\sigma(n,\beta)+\sigma (n,\beta-u^{(n+1)}),
\end{equation}
however at the end we will set $n=N$. To understand why the above equality is true let us examine (\ref{eq_h_beta_4}), but for $\sigma(n+1,\beta)$:
\begin{eqnarray}\label{eq_h_betax2}
\sum_{i=1}^{n+1}u^{(i)}\phi^{(i)} \mod d&=&\beta,\nonumber\\
\Downarrow \nonumber\\
\sum_{i=1}^{n}u^{(i)}\phi^{(i)} \mod d&=&\beta-u^{(n+1)}\phi^{(n+1)}.
\end{eqnarray}
As we can see, for $\phi^{(n+1)}=0$ we get an identical equation to (\ref{eq_h_beta_4}) and for $\phi^{(n+1)}=1$ we also have an equation equivalent to (\ref{eq_h_beta_4}) but for $\beta=\beta'-u^{(n+1)}$ and so $\sigma(n+1,\beta)$ is a sum of the number of solution for those two cases, therefore we get (\ref{eq_sigma_sum}).

Let us denote by $\sigma_{\min}(n)$ the minimal, nonzero $\sigma(n,\beta)$ over all $\beta \in\{0,\dots,d-1\}$. From (\ref{eq_sigma_sum}) it follows that either
\begin{equation}\label{eq_sigma_n+1_1}
\sigma_{\min}(n+1)= \sigma_{\min}(n)
\end{equation}
or 
\begin{equation}\label{eq_sigma_n+1_2}
\sigma_{\min}(n+1)\geqslant 2\sigma_{\min}(n).
\end{equation}
If (\ref{eq_sigma_n+1_1}) is true, then by the virtue of (\ref{eq_sigma_sum}) it implies that for some $\beta\in\{0,\dots,d-1\}$ for which $\sigma(n,\beta)=\sigma_{\min}(n)$ we have $\sigma (n,\beta-u^{(n+1)})=0$. Let us examine that case further. We can once again use Eq. (\ref{eq_sigma_sum}) to get
\begin{equation}
\sigma(n+1,\beta-u^{(n+1)})=\sigma(n,\beta-2u^{(n+1)}).
\end{equation}
Now, let us assume that $\sigma(n,\beta-k u^{(n+1)})=0$ for all positive integers $k\leqslant \kappa$, where $\kappa\in \mathbb{Z}_{+}$. Then from Eq. (\ref{eq_sigma_sum}) we have
\begin{equation}
\sigma(n+1,\beta-k u^{(n+1)})=\sigma(n,\beta-(k+1)u^{(n+1)}).
\end{equation}
Notice, that our assumption $\sigma(n,\beta-k u^{(n+1)})=0$ gives us a bound on the value of $\kappa$, since 
\begin{equation}
\sigma(n,\beta-d u^{(n+1)})=\sigma(n,\beta)=\sigma_{\min}(n)\neq 0,
\end{equation}
and so $\kappa \in \{1,\dots d-1\}$. Importantly, in the case of $k=\kappa$ it follows from Eq. (\ref{eq_sigma_sum}) that
\begin{equation}\label{eq_sigma_0}
\sigma(n,\beta -\kappa u^{(n+1)})=0
\end{equation}
and
\begin{equation}\label{eq_sigma_neq0}
\sigma(n+1,\beta -\kappa u^{(n+1)})\neq 0.
\end{equation}
In other words, every time (\ref{eq_sigma_n+1_1}) is true, it follows that there exists at least one $\beta'\in\{0,\dots,d-1\}$ for which equation
\begin{equation}\label{eq_tau_beta_cond}
\sum_{i=1}^{n'} u^{(i)}\phi^{(i)} \mod d = \beta'
\end{equation}
has no solution when $n'=n$ (as in (\ref{eq_sigma_0})) and has at least one solution when $n'=n+1$ (as in (\ref{eq_sigma_neq0})).

Finally, we are ready to set a bound on $\sigma_{\min}(n)$. In the case of $n=1$ we have exactly two values of $\beta$ for which we have a solution to Eq. (\ref{eq_h_beta_2}), so there are $d-2$ values of $\beta$ with no solution. If we assume that $\sigma_{\min}(2)=\sigma_{\min}(1)=1$, then we know that there are $d-3$ values of $\beta'$ for which (\ref{eq_h_beta_2}) has no solutions. Following this pattern we conclude that if for all $n\in\{1,\dots,d-2\}$ we have $\sigma_{\min}(n+1)=\sigma_{\min}(n)=1$, then for $n=d-1$ and for all $\beta\in\{0,\dots,d-1\}$ the following holds true
\begin{equation}
\sigma(d-1,\beta)\neq 0.
\end{equation}
From this we can infer that for $n>d-1$, $\sigma_{\min}(n)$ will be bounded according to (\ref{eq_sigma_n+1_2}). Moreover, since in general the bound from (\ref{eq_sigma_n+1_2}) for the case $n'=n+1$ is a linear function of the bound for the case $n'=n$, the exact order of applying bounds (\ref{eq_sigma_n+1_1}) and (\ref{eq_sigma_n+1_2}) does not matter - only the total number of times that each bound is applied matters. Therefore, the assumption $\sigma_{\min}(n+1)=\sigma_{\min}(n)=1$ for $n\in\{1,\dots,d-2\}$ yields a lower bound on on $\sigma_{\min}(n)$:
\begin{equation}\label{eq_sigma_bound}
\sigma_{\min}(n)\geqslant
\begin{cases}
1 \quad \textrm{for} \quad n\leqslant d-1,\\
2^{n-(d-1)} \quad \textrm{for} \quad n> d-1.
\end{cases}
\end{equation}
\and so we can simply write
\begin{equation}
\sigma_{\min}(N)\geqslant 2^{N-(d-1)}
\end{equation}
which ends the proof.
\end{proof}
This leads us strait to Theorem \ref{Raimat}
which for completeness we restate here.

\setcounter{thm}{2}
\begin{thm}\label{thm_ge_ks}
Consider a stabilizer $\mathbb{S}=\langle G_{1},G_{2}\rangle$. If the local dimension $d$ is prime and the corresponding stabilizer subspace $V$ is genuinely entangled then $\dim K(\mathbb{S})\geqslant \lceil \frac{N-1}{d-1} \rceil$.
\end{thm}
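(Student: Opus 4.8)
The plan is to reduce genuine entanglement to a statement about the zero set of a single linear form and then feed that into the counting bound of Lemma~\ref{lem_2^(-(d-1))}.

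First I would exploit the special structure of $k=2$: up to a scalar the only vector of the form $v_{i,j}$ with $i\neq j$ is $u:=v_{1,2}$ (since $v_{2,1}=-v_{1,2}$), so $K(\mathbb{S})=\mathrm{span}\{u\}$ and therefore $\dim K(\mathbb{S})\in\{0,1\}$. By Lemma~\ref{lem_ge}, $V$ is genuinely entangled iff $h(u,\phi)\neq 0$ for every nontrivial $\phi\in F_{N,2}$; in particular this forces $u\neq 0$, whence $\dim K(\mathbb{S})=1$. It thus remains only to show that genuine entanglement forces $\lceil (N-1)/(d-1)\rceil\leqslant 1$, i.e.\ $N\leqslant d$.

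The next step is to count the solutions of $h(u,\phi)=0$. Because the generators commute, $u$ satisfies the zero-sum constraint \eqref{eq_vij_com}, so $h(u,\phi_{T_0})=0$ and $h(u,\phi_{T_1})=\sum_n u^{(n)}=0$; the genuine-entanglement criterion forbids any further zero. Hence the solution set of $h(u,\phi)=0$ is exactly $\{\phi_{T_0},\phi_{T_1}\}$, of size $2$. Now I would invoke Lemma~\ref{lem_2^(-(d-1))} with $\beta=0$: the equation is solvable (it has the solution $\phi_{T_0}$), so it has at least $2^{N-(d-1)}$ solutions. Comparing with the exact count gives $2\geqslant 2^{N-(d-1)}$, i.e.\ $N\leqslant d$, and therefore $\lceil (N-1)/(d-1)\rceil\leqslant 1=\dim K(\mathbb{S})$, as claimed.

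The main obstacle is entirely contained in Lemma~\ref{lem_2^(-(d-1))}: given that $h(u,\phi)=\beta$ has a solution, showing that its number of solutions cannot fall below $2^{N-(d-1)}$. This is the delicate combinatorial part, handled by the recursion $\sigma(n+1,\beta)=\sigma(n,\beta)+\sigma(n,\beta-u^{(n+1)})$ together with a careful tracking of the minimal nonzero count $\sigma_{\min}(n)$: at each added coordinate it either stays fixed (forcing a previously unattainable residue to become attainable) or at least doubles, and one must argue that after at most $d-1$ coordinates every residue is attained, so that the doubling regime governs thereafter. Once that lemma is in hand, the $k=2$ case of the theorem follows by the short counting argument above; the difficulty in extending this to general $k$ (Conjecture~\ref{con_ge}) is precisely that with more generators one has many vectors $v_{i,j}$ and the zero set of a single form no longer controls all bipartitions.
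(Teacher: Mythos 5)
Your proposal is correct and follows essentially the same route as the paper's own proof: note $\dim K(\mathbb{S})=1$ from genuine entanglement, use Lemma~\ref{lem_ge} to pin the zero set of $h(v_{1,2},\cdot)$ to exactly the two trivial bipartitions, and then apply the counting bound of Lemma~\ref{lem_2^(-(d-1))} with $\beta=0$ to get $2\geqslant 2^{N-(d-1)}$, hence $\lceil (N-1)/(d-1)\rceil\leqslant 1$. You also correctly locate the real technical burden in Lemma~\ref{lem_2^(-(d-1))}, exactly as the paper does.
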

\begin{proof}
From the number of generators it directly follows that $\dim K(\mathbb{S})\leqslant 1$, but since we assume that $V$ is genuinely entangled we know that $v_{1,2}\neq 0$, and so we can just write $\dim K(\mathbb{S})= 1$. We need to show that
\begin{equation}\label{eq_1_geq}
1\geqslant \left\lceil \frac{N-1}{d-1} \right\rceil.
\end{equation}
By the virtue of Lemma \ref{lem_ge}, if $V$ is genuinely entangled, then 
\begin{equation}\label{eq_h=0}
h(v_{1,2},\phi)=0
\end{equation}
is only true for the representations of trivial bipartitions $\phi\in\{\phi_{T_{0}},\phi_{T_{1}}\}$, i.e., the number of $\phi\in F_{N,2}$ that fulfil (\ref{eq_h=0}) equals $2$. Then, from Lemma \ref{lem_2^(-(d-1))} we have
\begin{equation}
2\geqslant 2^{N-(d-1)},
\end{equation}
from which it easily follows that
\begin{equation}
1\geqslant \frac{N-1}{d-1}.
\end{equation}
If this inequality is true, then (\ref{eq_1_geq}) is as well, which ends the proof.
\end{proof}
Now that this theorem has been proven, we can discus what are the difficulties of proving Conjecture \ref{con_ge} in its most general form. The issue lays in the proof of a version of Lemma \ref{lem_2^(-(d-1))} for an arbitrary $k$. The number of solutions in the most general case would be $2^{N-(d-1)\cdot \dim K(\mathbb{S})}$, but trying to prove it by using the same arguments runs into problem when discussing the number of $\sigma(n,\beta)$ that are zero for $n=n'$ and are nonzero for $n=n'+1$. For arbitrary $k$, the procedure from Lemma \ref{lem_2^(-(d-1))} would give us the number of $\phi$ to be larger or equal to $2^{N-(d^{\dim K(\mathbb{S})}-1)}$. This is a result of the fact that for higher $\dim K(\mathbb{S})$ a situation where increase in $n$ makes only one $\sigma(n,\beta)$ nonzero is really rare. For example, if increasing $n$ from 0 to $d-1$ resulted in the change from zero to nonzero for $d-1$ sigmas, then each increase after that would make at least $d$ sigmas nonzero (this situation occurs when on the first $d-1$ sites, one vector $v_{i,j}$ has ones and the other vectors have zeros). To conclude, in order to prove Conjecture \ref{con_ge} using the same methods as we did, one would need to find a better way of estimating how many $\sigma(n,\beta)$ become nonzero with each increase in $n$.
\section{Genuinely entangled, stabilizer subspace of maximal dimension}\label{ap_gmax}
In this section we show an example of a genuinely entangled, stabilizer subspace of dimension that we claim is maximal for such subspaces (see Conjecture \ref{con_dim}). More specifically we construct the generators of stabilizer of the aforementioned subspace. Since the expressions on the generators are recursive equations dependent on the number of qubits $N$, we will use a notation $G_{i}(N)$ for an $i$'th generator for $N$ qubits. Similarly, we denote by $v_{i,j}(N)$ the vector defined as in (\ref{eq_vij_def}) for generators $G_{i}(N)$ and $G_{j}(N)$. Moreover, the number of generators $k$ will also depend on $N$, however in this case we will use $k$ and $k(N)$ interchangeably depending on whether the dependence on $N$ is important to a specific case.

Let $k=k_{min}(N)$ be defined by Eq. (\ref{eq_k_min}) and let us consider a set of generators $\{G_{i}(N)\}_{i=1}^{k}$. For $N=2$ we construct the generators as
\begin{eqnarray}\label{eq_gmax_2_d}
G_{1}(2)&&=X\otimes X^{d-1},\nonumber\\
G_{2}(2)&&=Z\otimes Z.
\end{eqnarray}
For arbitrary $N$ we give a recurrence relation for the generators and we distinguish two cases: if $N$ fulfils $k(N-1)=k(N)-1$, then
\begin{equation}\label{eq_gmax_N=_d}
G_{i}(N)=\begin{cases}
G_{i}(N-1)\otimes \mathbb{1} &i\notin \{k-1,k\},\\
G_{i}(N-1)\otimes P_{k-1}^{d-1} & i=k-1,\\
\mathbb{1}^{\otimes(N-2)}\otimes P_{k}\otimes P_{k} &i=k,
\end{cases}
\end{equation}
and otherwise
\begin{equation}\label{eq_gmax_Nneq_d}
G_{i}(N)=\begin{cases}
G_{i}(N-1)\otimes \mathbb{1} &i\notin \{l,k\},\\
\tilde{G}_{i}(N-1)\otimes P_{k-1}^{d-1-m} &i=l,\\
G_{i}(N-1)\otimes P_{k} &i=k,
\end{cases}
\end{equation}
where we define $\tilde{G}_{i}(N-1)$ as
\begin{equation}\label{eq2_gmax_Nneq_d}
\tilde{G}_{i}(N-1)=G_{i}(N-1)\left(\mathbb{1}^{\otimes (N-2)}\otimes P_{k-1}^{m+1}\right),
\end{equation}
parameter $m$ is equal to
\begin{eqnarray}\label{m_prop}
m&=&\sup(\mathcal{M}_{N,d}), \nonumber\\
\mathcal{M}_{N,d}&=&\Bigg\{n\in\mathbb{N}:\left\lceil\frac{N-1-n}{d-1}\right\rceil=\left\lceil\frac{N-1}{d-1}\right\rceil\Bigg\},
\end{eqnarray}
matrix $P_{i}$ equals 
\begin{equation}\label{Pi}
P_{i}=\left\{ 
\begin{array}{ll}
X & \textrm{ for odd\;} i, \\[1ex]
Z & \textrm{ for even\;}i,
\end{array}
\right.
\end{equation}
and
\begin{equation}\label{eq_l_def}
l(N)=\frac{k(k-1)}{2}+1-\left\lceil \frac{N-1}{d-1} \right\rceil.
\end{equation}
The above function $l(N)$ is defined on the set of $N$ for which $k(N)=\operatorname{const}$. Let $\{N_{\min},\dots,N_{\max}\}$ be a set of $N$ for which $k(N)=\operatorname{const}$, i.e.,
\begin{eqnarray}
k(N_{\min}-1)&=k(N_{\min})-1,\nonumber\\
k(N_{\max}+1)&=k(N_{\max})+1.
\end{eqnarray}
From (\ref{eq_n-1/d-1}) we have
\begin{equation}
\left\lceil \frac{N_{\min}-2}{d-1}\right\rceil = \frac{1}{2}[k(N_{\min})-1] [k(N_{\min})-2],
\end{equation}
\begin{equation}
\left\lceil \frac{N_{\max}-1}{d-1}\right\rceil = \frac{1}{2}k(N_{\min})\; [k(N_{\min})-1].
\end{equation}
Substituting them into (\ref{eq_l_def}) we get a set $\{1,\dots,k-1\}$, which is the set of all $l(N)$ for $k=\operatorname{const}.$. Let us note here that for $d=2$ this set is effectively $\{1,\dots,k-2\}$, because in that case equation $l=k-1$ is equivalent to $k(N-1)=k(N)-1$ which is the condition for (\ref{eq_gmax_N=_d}).
\begin{thm}
The stabilizer $\mathbb{S}_{\max}$ generated by the generators defined in (\ref{eq_gmax_2_d}), (\ref{eq_gmax_N=_d}) and (\ref{eq_gmax_Nneq_d}) stabilize a genuinely entangled, stabilizer subspace, where $k_{\min}(N)$ is defined in (\ref{eq_k_min}). Moreover, if $d$ is prime then the dimension of the stabilizer subspace equals $d^{N-k_{\min}(N)}$.
\end{thm}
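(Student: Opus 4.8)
The plan is to prove the two assertions separately: that $V$ is genuinely entangled, and that $\dim V = d^{N-k_{\min}(N)}$. Both hinge on establishing that the vectors $v_{i,j}$ [cf. Eq.~\eqref{eq_vij_def}] associated with the recursively defined generators span $K(\mathbb{S})$ with exactly the explicit staircase basis $\{u_i\}$ of Eq.~\eqref{eq_basis}. This spanning claim I would prove by induction on $N$, with base case $N=2$ obtained by a direct computation: for $G_1(2)=X\tp X^{d-1}$ and $G_2(2)=Z\tp Z$ one checks commutation via Eq.~\eqref{eq_xz_zx} and finds $v_{1,2}=(d-1)e_1+e_2$, which equals $(d-1)u_1$ and hence spans the same line as $u_1$ since $d-1$ is invertible modulo the prime $d$. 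For the inductive step I would treat the two branches of the recurrence separately, tracking how the commutation exponents $\tau_{i,j}^{(n)}$ [cf. Eq.~\eqref{eq_g_com}] are altered by the tensor factors $P_{k-1}^{d-1-m}$, $P_k$ and by the modification $\tilde{G}_i$ in Eqs.~\eqref{eq_gmax_N=_d}--\eqref{eq2_gmax_Nneq_d}, and verifying that the resulting $v_{i,j}(N)$ reproduce the vectors $u_i$ for the new $N$. Along the way, mutual commutation of all generators reduces, via Eq.~\eqref{eq_vij_com}, to checking $\sum_n u_i^{(n)}\equiv 0 \pmod d$ for each basis vector, which is immediate from their block structure.

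Once the basis $\{u_i\}$ is in hand, genuine entanglement follows from Lemma~\ref{lem_ge} by a clean combinatorial argument. Since $h(\cdot,\phi)$ is linear in its first slot and the $u_i$ are themselves among the $v_{i,j}$, it suffices to show that the only $\phi\in F_{N,2}$ with $h(u_i,\phi)=0$ for all $i$ are the trivial representations $\phi_{T_0},\phi_{T_1}$. Each interior block $u_i$ (with $i<K$ and $K=\lceil(N-1)/(d-1)\rceil$) carries $d$ consecutive unit entries, so $h(u_i,\phi)\equiv|B_i\cap Q|\pmod d$ vanishes exactly when the block $B_i$ lies entirely in $Q$ or entirely in $\overline{Q}$. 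Because consecutive blocks overlap in a single site, this forces the union $\{1,\dots,(K-1)(d-1)+1\}$ to lie wholly inside one part; using the complement symmetry $h(u_i,\phi_{T_1}-\phi)=-h(u_i,\phi)$ (a consequence of $\sum_n u_i^{(n)}=0$) I may assume it lies in $Q$. The last block $u_K$ then settles the matter: writing $r=N-1-(K-1)(d-1)$ and $a=d-r$, the condition $h(u_K,\phi)=0$ becomes $t+a\,\phi^{(N)}\equiv 0$, where $t\in\{1,\dots,r\}$ counts the unit entries of $u_K$ lying in $Q$; for $\phi^{(N)}=0$ this is impossible, while for $\phi^{(N)}=1$ it forces $t=r$, placing every site in $Q$ so that $\phi$ is trivial. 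The degenerate case $K=1$ is dispatched by the same last-block computation alone. Hence no nontrivial $\phi$ annihilates all $u_i$, and $V$ is genuinely entangled.

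For the dimension I would verify that the $k_{\min}(N)$ generators are independent in the sense of the preliminaries, so that none is a product of powers of the others; this follows by tracking the tensor factors freshly introduced on the new sites at each recursion step, which distinguish the affected generators. Invoking the result of Ref.~\cite{GHEORGHIU2014505} that a stabilizer with $k$ independent generators stabilizes a subspace of dimension $d^{N-k}$ for prime $d$ then yields $\dim V=d^{N-k_{\min}(N)}$, as claimed.

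The main obstacle is the inductive bookkeeping of the first paragraph: confirming that the recursively modified generators produce precisely the staircase vectors $u_i$ requires carefully propagating the exponents $\tau_{i,j}^{(n)}$ through both branches of the recurrence, including the role of the shift parameter $m$ in Eq.~\eqref{m_prop} and the index $l(N)$ in Eq.~\eqref{eq_l_def}. By contrast, once the basis is established, the entanglement and dimension statements follow cleanly from the overlap argument and the cited dimension formula.
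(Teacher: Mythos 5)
Your proposal is sound and, on the genuine-entanglement part, takes a genuinely different route from the paper's appendix proof. The shared core is the recursion bookkeeping: the paper, too, propagates the commutation exponents through both branches of the recurrence, obtaining $v_{i,j}(N)=v_{i,j}(N-1)\oplus(0)$ for untouched pairs and the new or updated vectors $v_{k-1,k}(N)=e_{N-1}+(d-1)e_N$ and $v_{l,k}(N)=\sum_{m'=0}^{m}e_{N-1-m'}+(d-1-m)e_N$ [Eqs. (\ref{eq_vij_N=_d})--(\ref{eq_vlk_sum})], and it settles the dimension claim exactly as you propose: distinguishing sites give independence and condition (\ref{eq_stab_nont_3}), then the formula of Ref. \cite{GHEORGHIU2014505} applies. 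The difference is what is done with that bookkeeping. The paper never assembles the staircase basis inside its proof: it runs an induction on $N$ whose hypothesis is genuine entanglement of $V(N-1)$ itself, so bipartitions cutting $\{1,\dots,N-1-m\}$ nontrivially are handled by the hypothesis applied to the zero-padded old vectors, and the leftover bipartitions (those with $Q$ or $\overline{Q}$ inside the freshly added sites) are killed by the single new vector. You instead prove once that the nonzero $v_{i,j}(N)$ are, up to scalars invertible modulo the prime $d$, exactly the vectors $u_i$ of Eq. (\ref{eq_basis}), and then give a direct, non-inductive combinatorial argument that no nontrivial $\phi$ annihilates all $u_i$; I checked that argument (monochromatic overlapping blocks, the complement symmetry, and the last-block count $t+a\,\phi^{(N)}\equiv 0$ with $1\leqslant t\leqslant r\leqslant d-1$ and $a=d-r$, including the degenerate case $K=1$) and it is correct. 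Your route buys two things: it substantiates the main text's assertion that the vectors of $\mathbb{S}_{\max}$ realize the basis (\ref{eq_basis}), which the paper states but never explicitly verifies, and it decouples the entanglement argument from the recursion, so it applies verbatim to any stabilizer whose $K(\mathbb{S})$ contains the staircase vectors; the paper's induction, in exchange, needs only local step-by-step information and no global description of $K(\mathbb{S}_{\max})$. Two details to make explicit when writing yours up: commutation requires the (true) converse of (\ref{eq_vij_com}), namely $[G_i,G_j]=0$ iff $\sum_n\tau_{i,j}^{(n)}\equiv 0\pmod d$; and the independence argument must be run in the stronger form that $\prod_iG_i^{\alpha_i}$ is proportional to $\mathbb{1}$ only when all $\alpha_i=0$, since this is also what makes $\mathbb{S}_{\max}$ a stabilizer and guarantees $V\neq\{0\}$, a standing hypothesis of Lemma \ref{lem_ge} --- the distinguishing-site argument you sketch does deliver this.
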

Before we start the proof of the above theorem, let us introduce a convenient convention, namely for even $k$ instead of calculating vectors $v_{i,k}$ we will calculate
\begin{equation}
v'_{i,k}=(d-1)v_{i,k}.
\end{equation}
This makes it so that the vast majority of vector elements equals $0,1$, which simplifies our calculations. Moreover, this does not pose any problems, since in this proof we only care about the general properties of $K(\mathbb{S}_{\max})$, which both $v_{i,k}$ and $v'_{i,k}$ are a part of, and not the vectors in particular. In practise, it is as if we assumed that $k$ is always odd while calculating the elements of $v_{i,k}$. With that, we can move to the main part of the proof.
\begin{proof}
To complete the proof we first need to show that $\mathbb{S}_{\max}=\langle G_{1}(N),\dots,G_{k}(N)\rangle$ is the stabilizer, after which we need to prove that the stabilizer subspace is genuinely entangled. The subgroup $\mathbb{S}$ is a stabilizer if $a\mathbb{1}\notin\mathbb{S}$ for $a\neq 1$ and if all the elements of $\mathbb{S}$ commute. This can be shown by proving the following conditions
\begin{equation}\label{eq_stab_nont_1}
\forall_{i\in\{1,\dots,k\}}\quad G_{i}(N)^{d}=\mathbb{1},
\end{equation}
\begin{equation}\label{eq_stab_nont_2}
\forall_{i,j\in\{1,\dots,k\}}\quad \left[G_{i}(N),G_{j}(N)\right]=0,    
\end{equation}
\begin{equation}\label{eq_stab_nont_3}
\forall_{a\neq 1}\;\forall{\alpha_{i}\in\{0,\dots,d-1\}}\quad \prod_{i=1}^{k}G_{i}^{\alpha_{i}}(N)\neq a \mathbb{1}.
\end{equation}
A proof of (\ref{eq_stab_nont_1}) is trivial, since by definition the generators $G_{i}(N)$ are a tensor product of matrices \eqref{eq_xz_def}. 

Next, proving (\ref{eq_stab_nont_3}) is also fairly simple: from Eq. (\ref{eq_gmax_2_d}) and Eq. (\ref{eq_gmax_N=_d}) it follows that $G^{(1)}_{i}(N)=X$ implies $i=1$ and similarly  $G^{(2)}_{i}(N)=Z$ implies $i=2$. Moreover, from Eq. (\ref{eq_gmax_N=_d}) and Eq. (\ref{eq_gmax_Nneq_d}) we can conclude that for $n$ for which $k(n)=k(n+1)-1$, $G_{i}^{(n)}(N)=P_{k(n)}$ implies $i=k(n)$. Consequently, for every generator $G_{i}(N)$ we can find $n$ such that $G_{i}^{(n)}(N)\neq G_{j}^{(n)}(N)$ for $i\neq j$, which in turn implies that
\begin{equation}
\prod_{i=1}^{k}G_{i}^{\alpha_{i}}(N)\sim \mathbb{1},
\end{equation}
iff for all $i\in \{1,\dots, k\}$, $\alpha_{i}=0$, but for that case $\prod_{i=1}^{k}G_{i}^{0}(N)=\mathbb{1}$, which proves (\ref{eq_stab_nont_3}).

Proving Eq. (\ref{eq_stab_nont_2}) is more complex, but thankfully we can make use of the vectors $v_{i,j}(N)$ defined in Eq. (\ref{eq_vij_def}). In this formalism, the condition (\ref{eq_stab_nont_2}) translates to
\begin{equation}\label{eq_vij_sum_d}
\sum_{n=1}^{N}v_{i,j}^{(n)}=0 \mod d.
\end{equation}
Let us begin with $N=2$. From Eq. (\ref{eq_gmax_2_d}) it is clear that $v_{1,2}(2)=(1,d-1)$ and so (\ref{eq_vij_sum_d}) is fulfilled. The proof for an arbitrary $N$ will be done by induction, i.e., we assume that (\ref{eq_vij_sum_d}) is fulfilled for $N'=N-1$ and we show that it is fulfilled for $N'=N$. For clarity, we will consider two separate cases: first (\ref{eq_gmax_N=_d}) and then (\ref{eq_gmax_Nneq_d}).

From (\ref{eq_gmax_N=_d}) we can easily conclude that for $i,j\neq k$ we have
\begin{equation}\label{eq_vij_N=_d}
v_{i,j}(N)=v_{i,j}(N-1)\oplus (0).
\end{equation}
As for the vectors $v_{i,k}(N)$, from (\ref{eq_gmax_N=_d}) it is clear that 
\begin{eqnarray}
v_{i,k}^{(N)}(N)=0& \qquad &i\neq k-1,\nonumber\\
v_{i,k}^{(N)}(N)=d&-1\qquad &i=k-1,\nonumber\\
v_{i,k}^{(n)}(N)=0& &n\in\{1,\dots,N-2\}.
\end{eqnarray}
The only unknown elements are $v_{i,k}^{(N-1)}$, however they also can be easily determined, by analysing generators for $N'=N-1$. There are two subcases: for $N=3$ we have to refer to (\ref{eq_gmax_2_d}) and for $N>3$ to (\ref{eq_gmax_Nneq_d}). However, according to both (\ref{eq_gmax_2_d}) and (\ref{eq_gmax_Nneq_d}) we have
\begin{eqnarray}
G_{i}^{(N-1)}(N-1)&\in& \{\mathbb{1},P_{k'-1}^{q}\}\quad  i\in\{1,\dots,k'-1\},\nonumber\\
G_{k'}^{(N-1)}(N-1)&=&P_{k'},
\end{eqnarray}
where $k'=k(N-1)=k-1$ and $q\in \mathbb{Z}_{+}$. This implies that
\begin{eqnarray}\label{eq_vik_d}
v_{i,k}(N)&=&0  \qquad i\in\{1,\dots,k-2\},\nonumber\\
v_{k-1,k}(N)&=&e_{N-1}+(d-1)e_{N}.
\end{eqnarray}
Clearly from induction, all vectors $v_{i,j}(N)$ fulfil (\ref{eq_vij_sum_d}).

Next, let us consider the case (\ref{eq_gmax_Nneq_d}). It is easy to see that
\begin{equation}\label{eq_vij_+0_d}
v_{i,j}(N)=v_{i,j}(N-1)\oplus (0)
\end{equation}
is true for $i,j\neq l$. Moreover, the above is also true for $i\neq l,k$ and $j=l$, since from (\ref{eq_gmax_N=_d}) and (\ref{eq_gmax_Nneq_d}) it follows that for such $i$ we have
\begin{equation}
G_{i}^{(N-1)}(N-1)\in \left\{\mathbb{1},P_{k-1}^{q}\right\}
\end{equation}
for some $q\in \mathbb{Z}_{+}$. This means that we have one vector left to consider, namely for $i=l$ and $j=k$ we have
\begin{equation}\label{eq_vij_eN_d}
v_{l,k}(N)=v_{l,k}(N-1)\oplus (0) +(m+1)e_{N-1} +(d-1-m)e_{N}.
\end{equation}
Clearly, if for $N'=N-1$ (\ref{eq_vij_sum_d}) is fulfilled then all $v_{i,j}(N)$ also fulfil (\ref{eq_vij_sum_d}). This proves (\ref{eq_stab_nont_2}) for all $N$.

To reiterate, we proved that \eqref{eq_stab_nont_1}, \eqref{eq_stab_nont_2} and \eqref{eq_stab_nont_3} are fulfilled by $G_{1},\dots,G_{k}$, hence $\mathbb{S}_{\max}$ is a stabilizer. Moreover, since we have shown that $\prod_{i=1}^{k}G_{i}^{\alpha}(N)=\mathbb{1}$ is only true if all $\alpha_{i}=0$, the generators are independent and so if $d$ is prime then the dimension of the stabilizer subspace equals $d^{N-k_{\min}(N)}$ \cite{GHEORGHIU2014505}. This leaves only the question of genuine entanglement of the stablizer subspace $V$.

For $N=2$ (\ref{eq_gmax_2_d}) it is easy to see from Corollary \ref{cor_ge} that $V$ is genuinely entangled, since the only nontrivial bipartition is $\{1\}|\{2\}$. For arbitrary $N$ we again use a prove by induction, i.e., we assume that for $N'=N-1$, the subspace $V(N-1)$ is genuinely entangled and we show that this implies that $V(N)$ is also genuinely entangled. 

First, consider a case $k(N-1)=k(N)-1$, which corresponds to generators (\ref{eq_gmax_N=_d}). From our assumption and from (\ref{eq_vij_+0_d}) it follows that if we only consider nontrivial bipartitions of $\{1,\dots,N-1\}$, then for all such bipartitions there exist $i,j\in\{1,\dots,k-1\}$ such that
\begin{equation}\label{eq_h_neq0}
h(v_{i,j}(N),\phi)\neq 0,
\end{equation}
where $\phi$ is a representation of a bipartition $Q|\overline{Q}$ and $h(\cdot,\cdot)$ is defined in Eq. (\ref{eq_h_def}). Moreover, from (\ref{eq_vij_N=_d}) and (\ref{eq_vij_+0_d}) it follows that $v_{i,j}^{(N)}(N)=0$ for $i,j\neq k$, and so by the above argument, for all $Q\subset \{1,\dots,N\}$ such that $Q$ is nontrivial and $Q\neq \{N\},\{1,\dots,N-1\}$ there exist $i,j\in\{1,\dots,k-1\}$ for which (\ref{eq_h_neq0}) is fulfilled. As for $Q= \{N\},\{1,\dots,N-1\}$, from (\ref{eq_vik_d}) we have that
\begin{equation}
h(v_{k-1,k}(N),\phi)\neq 0,
\end{equation}
hence for all nontrivial bipartitions of the set $\{1,\dots,N\}$ there exist a pair $i,j\in\{1,\dots,k\}$ such that 
\begin{equation}
h(v_{i,j}(N),\phi)\neq 0,
\end{equation}
which by the virtue of Lemma \ref{lem_ge} shows that $V(N)$ is genuinely entangled.

Next, let us consider a case (\ref{eq_gmax_Nneq_d}) for $l(N)=l(N-1)+1$. Since $l(N)$ is a nonincreasing function, we know that $v_{l,k}(N-1)=0$. Then, we can use the same argumentation as for the previous case but with a vector $v_{l,k}(N)$ instead of $v_{k-1,k}(N)$.

Lastly, let us consider a case (\ref{eq_gmax_Nneq_d}) for $l(N)=l(N-1)$. By iterating (\ref{eq_vij_eN_d}) from $m'=m$ to $m'=0$ we can derive an explicit formula for $v_{l,k}(N)$:
\begin{equation}\label{eq_vlk_sum}
v_{l,k}(N)=\sum_{m'=0}^{m} e_{N-1-m'}+(d-1-m)e_{N}.
\end{equation}
Moreover, the same iteration of (\ref{eq_vij_+0_d}) implies that for all pairs $(i,j)\neq (l,k)$ and for all $n\in \{N-m,\dots,N\}$
\begin{equation}
v_{i,j}^{(n)}=0.
\end{equation}
Therefore, our induction assumption implies that for every nontrivial subset $Q\subset \{1,\dots N-1-m\}$ there exist a pair $(i,j)\neq (l,k)$ for which (\ref{eq_h_neq0}) is fulfilled, which by the virtue of Lemma \ref{lem_ge} implies, that $V(N)$ is genuinely entangled. This leaves us with bipartitions $Q|\overline{Q}$ for which $Q\subset \{N-m,\dots,N\}$ or $\overline{Q}\subset \{N-m,\dots,N\}$. However, from (\ref{eq_vlk_sum}) it follows that for every such $Q$ (or $\overline{Q}$), (\ref{eq_h_neq0}) is fulfilled by $v_{l,k}(N)$, which ends the proof.
\end{proof}


\end{document}